    \numberwithin{equation}{section}
    \numberwithin{theorem}{section}
    \spnewtheorem{cor}[theorem]{Corollary}{\bfseries}{\itshape}
    \spnewtheorem{lem}[theorem]{Lemma}{\bfseries}{\itshape}
    \spnewtheorem{prop}[theorem]{Proposition}{\bfseries}{\itshape}
    \spnewtheorem{obs}[theorem]{Observation}{\bfseries}{\itshape}
\renewcommand{\vec}[1]{\mathbf{#1}}
\newcommand{\ignore}[1]{}
\begin{document}

\title{Leaderless deterministic chemical reaction networks}

\opt{normal}{
    \author{
       % Anne Condon\thanks{University of British Columbia, Vancouver, BC, Canada, {\tt condon@cs.ubc.ca}}
       % \and
        David Doty\thanks{California Institute of Technology, Pasadena, CA, USA, {\tt ddoty@caltech.edu}. This author was supported by the Molecular Programming Project under NSF grant 0832824 and by NSF grants CCF-1219274 and CCF-1162589.}
        \and
        Monir Hajiaghayi\thanks{University of British Columbia, Vancouver, BC, Canada, {\tt monirh@cs.ubc.ca}. %\todo{add funding support if any (e.g. NSERC grant)}
				}
    }
    \date{}
}

\opt{submission}{
    %\author{Ho-Lin Chen\inst{1}, David Doty\inst{2}, and David Soloveichik\inst{3}}
%    \institute{National Taiwan University, Taipei, Taiwan \email{holinc@gmail.com}
%    \and
%    California Institute of Technology, Pasadena, California, USA \email{ddoty@caltech.edu}
%    \and
%    University of California, San Francisco, San Francisco, CA, USA \email{david.soloveichik@ucsf.edu}
%    }
}
\newcommand{\calN}{\mathcal{N}}
\newcommand{\calE}{\mathcal{E}}
\newcommand{\calY}{\mathcal{Y}}
\newcommand{\leqs}{\leq_\mathrm{s}}
\newcommand{\hf}{\hat{f}}
\newcommand{\vc}{\vec{c}}
\newcommand{\vo}{\vec{o}}
\newcommand{\vp}{\vec{p}}
\newcommand{\vi}{\vec{i}}
\newcommand{\vx}{\vec{x}}
\newcommand{\vy}{\vec{y}}
\newcommand{\vw}{\vec{w}}
\newcommand{\vz}{\vec{z}}
\newcommand{\vu}{\vec{u}}
\newcommand{\vv}{\vec{v}}
\newcommand{\oX}{\overline{X}}
\renewcommand{\vb}{\vec{b}}
\newcommand{\bfr}{\mathbf{r}}
\newcommand{\bfp}{\mathbf{p}}

%David S: chemical reactions
\def\longrightharpoonup{\relbar\joinrel\rightharpoonup}
\def\longleftharpoondown{\leftharpoondown\joinrel\relbar}
\def\longrightleftharpoons{\mathop{\vcenter{\hbox{\ooalign{\raise1pt\hbox{$\longrightharpoonup\joinrel$}\crcr\lower1pt\hbox{$\longleftharpoondown\joinrel$}}}}}}
\def\rxn{\mathop{\rightarrow}\limits}  % use as A+B \rxn^k C
\def\lrxn{\mathop{\longrightarrow}\limits}
\def\revrxn{\mathop{\rightleftharpoons}\limits}
\def\lrevrxn{\mathop{\longrightleftharpoons}\limits}

\maketitle

\begin{abstract}
  This paper answers an open question of Chen, Doty, and Soloveichik~\cite{CheDotSol12}, who showed that a function $f:\N^k\to\N^l$ is deterministically computable by a stochastic chemical reaction network (CRN) if and only if the graph of $f$ is a semilinear subset of $\N^{k+l}$.
  That construction crucially used ``leaders'': the ability to start in an initial configuration with constant but non-zero counts of species other than the $k$ species $X_1,\ldots,X_k$ representing the input to the function $f$.
  The authors asked whether deterministic CRNs without a leader retain the same power.

  We answer this question affirmatively, showing that every semilinear function is deterministically computable by a CRN whose initial configuration contains only the input species $X_1,\ldots,X_k$, and zero counts of every other species.
  We show that this CRN completes in expected time $O(n)$, where $n$ is the total number of input molecules.
  This time bound is slower than the $O(\log^5 n)$ achieved in~\cite{CheDotSol12}, but faster than the $O(n \log n)$ achieved by the \emph{direct} construction of~\cite{CheDotSol12} (Theorem 4.1 in the latest online version of~\cite{CheDotSol12}), since the fast construction of that paper (Theorem 4.4) relied heavily on the use of a fast, error-prone CRN that computes arbitrary computable functions, and which crucially uses a leader.
\end{abstract}

% title: Leaderless deterministic chemical reaction networks

% text abstract
%This paper answers an open question of Chen, Doty, and Soloveichik [1], who showed that a function f:N^k --> N^l is deterministically computable by a stochastic chemical reaction network (CRN) if and only if the graph of f is a semilinear subset of N^{k+l}. That construction crucially used "leaders": the ability to start in an initial configuration with constant but non-zero counts of species other than the k species X_1,...,X_k representing the input to the function f. The authors asked whether deterministic CRNs without a leader retain the same power.

%We answer this question affirmatively, showing that every semilinear function is deterministically computable by a CRN whose initial configuration contains only the input species X_1,...,X_k, and zero counts of every other species. We show that this CRN completes in expected time O(n), where n is the total number of input molecules. This time bound is slower than the O(log^5 n) achieved in [1], but faster than the O(n log n) achieved by the *direct* construction of [1] (Theorem 4.1 in the latest online version of [1]), since the fast construction of that paper (Theorem 4.4) relied heavily on the use of a fast, error-prone CRN that computes arbitrary computable functions, and which crucially uses a leader.

% keywords: chemical reaction network, deterministic, leader election, semilinear

\section{Introduction}
\label{sec-intro}

In the last two decades, theoretical and experimental studies in molecular programming have shed light on the problem of integrating logical computation with biological systems.
One goal is to re-purpose the \emph{descriptive} language of chemistry and physics, which describes how the natural world works, as a \emph{prescriptive} language of programming, which prescribes how an artificially engineered system \emph{should} work.
When the programming goal is the manipulation of individual molecules in a well-mixed solution, the language of chemical reaction networks (CRNs) is an attractive choice.
A CRN is a finite set of reactions such as $X+Y\to X+Z$ among abstract molecular species, each describing a rule for transforming reactant molecules into product molecules.
%Traditionally the analysis of CRNs has mainly focused on explanation of information processing occurring in natural cellular regulatory networks.
%However, recent advances in synthetic biology have viewed CRNs as an encouraging language for the design of artificial biochemical circuits.

CRNs may model the ``amount'' of a species as a real number, namely its concentration (average count per unit volume), or as a nonnegative integer (total count in solution, requiring the total volume of the solution to be specified as part of the system).
The latter integer counts model is called ``stochastic'' because reactions that discretely change the state of the system are assumed to happen probabilistically, with reactions whose reactants have high molecular counts more likely to happen first than reactions whose molecular counts are smaller.
The computational power of CRNs has been investigated with regard to simulating boolean circuits~\cite{magnasco1997chemical}, neural networks~\cite{hjelmfelt1991chemical}, digital signal processing~\cite{riedelDSP2012}, and simulating bounded-space Turing machines with an arbitrary small, non-zero probability of error with only a polynomial slowdown~\cite{angluin2006fast}.
CRNs are even efficiently Turing-universal, again with a small, nonzero probability of error over all time~\cite{soloveichik2009robust}.
Certain CRN termination and producibility problems are undecidable~\cite{zavattaro2008termination,CooSolWinBru09}, and others are $\PSPACE$-hard~\cite{ThaCon12}.
It is also difficult to design a CRN to ``delay'' the production of a certain species~\cite{CondonHMT12,condon2012reachability}.
Using a theoretical model of DNA strand displacement, it was shown that any CRN can be transformed into a set of DNA complexes that approximately emulate the CRN~\cite{cardelli2011strand}.
Therefore even hypothetical CRNs may one day be reliably implementable by real chemicals.

While these papers focus on the stochastic behaviour of chemical kinetics, our focus is on CRNs with deterministic guarantees on their behavior.
Some CRNs have the property that they deterministically progress to a correct state, no matter the order in which reactions occur.
For example, the CRN with the reaction $X \to 2Y$ is guaranteed eventually to reach a state in which the count of $Y$ is twice the initial count of $X$, i.e., computes the function $f(x)=2x$, representing the input by species $X$ and the output by species $Y$.
Similarly, the reactions $X_1 \to 2Y$ and $X_2 + Y \to \emptyset$, under arbitrary choice of sequence of the two reactions, compute the function $f(x_1,x_2) = \max\{0,2x_1-x_2\}$.

Angluin, Aspnes and Eisenstat~\cite{AngluinAE06} investigated the computational behaviour of deterministic CRNs under a different name known as \emph{population protocols}~\cite{angluin2006passivelymobile}. They showed that the input sets $S\subseteq \mathbb{N}^k$ decidable by deterministic CRNs (i.e. providing ``yes'' or ``no'' answers by the presence or absence of certain indicator species) are precisely the \emph{semilinear} subsets of $\mathbb{N}^k$.\footnote{Semilinear sets are defined formally in Section~\ref{sec-prelim}. Informally, they are finite unions of ``periodic'' sets, where the definition of ``periodic'' is extended in a natural way to multi-dimensional spaces such as $\N^k$.}
Chen, Doty, and Soloveichik~\cite{CheDotSol12} extended these results to function computation and showed that precisely the semilinear functions (functions $f$ whose graph $\setr{(\vx,\vy) \in \N^{k+l}}{f(\vx)=\vy}$ is a semilinear set) are deterministically computable by CRNs.
We say a function $f: \mathbb{N}^k \rightarrow \mathbb{N}^l$ is \emph{stably} (a.k.a. \emph{deterministically}) computable by a CRN $\calC$ if there are ``input'' species $X_1, \ldots ,X_k$ and ``output'' species $Y_1, \ldots ,Y_l$ such that, if $\calC$ starts with $x_1, \ldots ,x_k$ copies of $X_1, \ldots ,X_k$ respectively, then with probability one, it reaches a count-stable configuration in which the counts of $Y_1,  \ldots  , Y_l$ are expressed by the vector $f(x_1, . . . , x_k)$, and these counts never again change~\cite{CheDotSol12}.

The method proposed in~\cite{CheDotSol12} uses some auxiliary ``leader'' species present initially, in addition to the input species.
To illustrate their utility, suppose that we want to compute function $f(x) = x + 1$ with CRNs. Using the previous approach, we have an input species $X$ (with initial count $x$), an output species $Y$ and an auxiliary ``leader'' species $L$ (with initial count 1).
The following reactions compute $f(x)$:
\begin{align*}
	X &\rightarrow Y\\
	L &\rightarrow Y
\end{align*}

However, it is experimentally difficult to prepare a solution with a single copy (or a small constant number) of a certain species.
The authors of~\cite{CheDotSol12} asked whether it is possible to do away with the initial ``leader'' molecules, i.e., to require that the initial configuration contains initial count $x_1,x_2,\ldots,x_k$ of input species $X_1,X_2,\ldots,X_k$, and initial count 0 of every other species.
It is easy to ``elect'' a single leader molecule from an arbitrary initial number of copies using a reaction such as $L+L \to L$, which eventually reduces the count of $L$ to 1.
However, the problem with this approach is that, since $L$ is a reactant in other reactions, there is no way in general to prevent $L$ from participating in these reactions until the reaction $L+L \to L$ has reduced it to a single copy.

Despite these difficulties, we answer the question affirmatively, showing that each semilinear function can be computed by a ``leaderless'' CRN, i.e., a CRN whose initial configuration contains only the input species.
%The formal statement of our main result is Theorem~\ref{thm-semilinear-function-n-no-leader} in Section~\ref{sec-results}.
To illustrate one idea used in our construction, consider the function $f(x)= x + 1$ described above.
In order to compute the function without a leader (i.e., the initial configuration has $x$ copies of $X$ and 0 copies of every other species), the following reactions suffice:
\begin{align}
	X &\to B +  2Y \label{rxn-intro-example-1}
\\
	B + B &\to B + K \label{rxn-intro-example-2}
\\
	Y + K &\to \emptyset \label{rxn-intro-example-3}
	%K + Y &\rightarrow \emptyset	
\end{align}
Reaction~\ref{rxn-intro-example-1} produces $x$ copies of $B$ and $2 x$ copies of $Y$.
Reaction~\ref{rxn-intro-example-2} consumes all copies of $B$ except one, so reaction~\ref{rxn-intro-example-2} executes precisely $x-1$ times, producing $x - 1$ copies of $K$.
Therefore reaction~\ref{rxn-intro-example-3} consumes $x-1$ copies of output species $Y$, eventually resulting in $2x - (x-1) = x + 1$ copies of $Y$.
Note that this approach uses a sort of leader election on the $B$ molecules.

%It is not obvious from this example how to generalize the idea to arbitrary semilinear functions,
In Section~\ref{sec-results}, we generalize this example, describing a leaderless CRN construction to compute any semilinear function.
We use a similar framework to the construction of~\cite{CheDotSol12}, decomposing the semilinear function into a finite union of affine partial functions (linear functions with an offset; defined formally in Section~\ref{sec-prelim}).
We show how to compute each affine function with leaderless CRNs, using a fundamentally different construction than the affine-function computing CRNs of~\cite{CheDotSol12}.
This result, Lemma~\ref{lem-affine-fast}, is the primary technical contribution of this paper.
Next, in order to decide which affine function should be applied to a given input, we employ the leaderless semilinear predicate computation of Angluin, Aspnes, and Eisenstat~\cite{angluin2006fast}; this latter part of the construction is actually identical to the construction of~\cite{CheDotSol12}, but we include it because our time analysis is different.

Let $n=\|\vx\| = \|\vx\|_1 = \sum_{i=1}^{k} \vx(i)$ be the number of molecules present initially, as well as the volume of the solution.
The authors of~\cite{CheDotSol12} showed, for each semilinear function $f$, a direct construction of a CRN that computes $f$ (using leaders) on input $\vx$ in expected time $O(n \log n)$.
They then combined this direct, error-free construction in parallel with a fast ($O(\log^5 n)$) but error-prone CRN that uses a leader to compute \emph{any} computable function (including semilinear), using the error-free computation to change the answer of the error-prone computation only if the latter is incorrect.
This combination speeds up the computation from expected time $O(n \log n)$ for the direct construction to expected time $O(\log^5 n)$ for the combined construction.

Since we assume no leaders may be supplied in the initial configuration, and since the problem of computing arbitrary computable functions without a leader remains a major open problem~\cite{angluin2006fast}, this trick does not work for speeding up our construction.
However, we show that with some care in the choice of reactions, the direct stable computation of a semilinear function can be done in expected time $O(n)$, improving upon the $O(n \log n)$ bound of the direct construction of~\cite{CheDotSol12}.

%Here, we show that the whole process of leaderless computation of any semilinear function can be done in expected time $O(n)$ that is really considerable convergence time reduction.

%!TEX root =  main.tex

\section{Preliminaries}
\label{sec-prelim}

%Throughout the paper we use both superscripts and subscripts to index variables to make for easier reading; the superscript never means exponentiation.
%Apologies in advance.

Given a vector $\vx\in\N^k$, let $\|\vx\| = \|\vx\|_1 = \sum_{i=1}^k |\vx(i)|$, where $\vx(i)$ denotes the $i$th coordinate of $\vx$.
A set $A \subseteq \N^k$ is \emph{linear} if there exist vectors $\vec{b},\vec{u}_1,\ldots,\vec{u}_p \in\N^k$ such that
$$A=\setl{\vec{b} + n_1 \vec{u}_1 + \ldots + n_p \vec{u}_p}{n_1,\ldots,n_p\in\N}.$$
$A$ is \emph{semilinear} if it is a finite union of linear sets.
If $f:\N^k\to\N^l$ is a function, define the \emph{graph} of $f$ to be the set
$\setl{ (\vx,\vy) \in \N^k \times \N^l }{ f(\vx) = \vy }.$
A function is \emph{semilinear} if its graph is semilinear.

We say a partial function $f:\N^k \dashrightarrow \N^l$ is \emph{affine} if there exist $kl$ rational numbers $a_{1,1},\ldots,a_{k,l}\in\Q$ and $l+k$ nonnegative integers $b_1,\ldots,b_l,c_1,\ldots,c_k\in\N$ such that, if $\vy = f(\vx)$, then for each $j \in \{1,\ldots,l\}$, $\vy(j) = b_j + \sum_{i=1}^k a_{i,j} (\vx(i) - c_i)$, and for each $i \in \{1,\ldots,k\}$, $\vx(i) - c_i \geq 0$.
In matrix notation, there exist a $k \times l$ rational matrix $\vec{A}$ and vectors $\vb \in \N^l$ and $\vec{c} \in \N^k$ such that $f(\vx) = \vec{A} (\vx-\vec{c}) + \vb$.
%(We use partial functions since evaluating an affine function on integer coordinates may result in a noninteger output.)
%there exist $b,w_1,\ldots,w_k\in\N$ such that if $\vy(j)$ is the $j$th coordinate of $f(\vx)$, then $\vy(j) = b + \sum_{i=1}^k w_i \vx_i$.
%In other words, the graph of $f$, when projected onto the $(k+1)$-dimensional space defined by the $k$ coordinates corresponding to $\vx$ and the single coordinate corresponding to $\vy(j)$, is a subset of a $k$-dimensional hyperplane.

This definition of affine function may appear contrived; see~\cite{CheDotSol12} for an explanation of its various intricacies.
For reading this paper, the main utility of the definition is that it satisfies Lemma~\ref{lem-semilinear-function-finite-union-linear}.

Note that by appropriate integer arithmetic, a partial function $f:\N^k \dashrightarrow \N^l$ is affine if and only if there exist $kl$ integers $n_{1,1},\ldots,n_{k,l}\in\Z$ and $2l+k$ nonnegative integers $b_1,\ldots,b_l,c_1,\ldots,c_k,d_1,\ldots, \\ d_l\in\N$ such that, if $\vy = f(\vx)$, then for each $j \in \{1,\ldots,l\}$,
$\vy(j) = b_j + \frac{1}{d_j} \sum_{i=1}^k n_{i,j} (\vx(i) - c_i)$, and for each $i \in \{1,\ldots,k\}$, $\vx(i) - c_i \geq 0$.
Each $d_j$ may be taken to be the least common multiple of the denominators of the rational coefficients in the original definition.
We employ this latter definition, since it is more convenient for working with integer-valued molecular counts.

\subsection{Chemical reaction networks}

If $\Lambda$ is a finite set (in this paper, of chemical species), we write $\N^\Lambda$ to denote the set of functions $f:\Lambda \to \N$.
Equivalently, we view an element $\vc\in\N^\Lambda$ as a vector of $|\Lambda|$ nonnegative integers, with each coordinate ``labeled'' by an element of $\Lambda$.
Given $X\in \Lambda$ and $\vc \in \N^\Lambda$, we refer to $\vc(X)$ as the \emph{count of $X$ in $\vc$}.
%We also equivalently view an element $\vc \in \N^\Lambda$ as a \emph{multiset} of elements of $\Lambda$, where $\vc(X)$ is the multiplicity of element $X \in \Lambda$.
%We therefore write $\vc \subseteq \vc'$ to denote that $\vc(X) \leq \vc'(X)$ for all $X \in \Lambda$.
We write $\vc \leq \vc'$ to denote that $\vc(X) \leq \vc'(X)$ for all $X \in \Lambda$.
Given $\vc,\vc' \in \N^\Lambda$, we define the vector component-wise operations of addition $\vc+\vc'$, subtraction $\vc-\vc'$, and scalar multiplication $n \vc$ for $n \in \N$.
If $\Delta \subset \Lambda$, we view a vector $\vc \in \N^\Delta$ equivalently as a vector $\vc \in \N^\Lambda$ by assuming $\vc(X)=0$ for all $X \in \Lambda \setminus \Delta.$

Given a finite set of chemical species $\Lambda$, a \emph{reaction} over $\Lambda$ is a triple $\alpha = \langle \bfr,\bfp,k \rangle \in \N^\Lambda \times \N^\Lambda \times \R^+$, specifying the stoichiometry of the reactants and products, respectively, and the \emph{rate constant} $k$.
If not specified, assume that $k=1$ (this is the case for all reactions in this paper), so that the reaction $\alpha=\langle \bfr,\bfp,1 \rangle$ is also represented by the pair $\pair{\bfr}{\bfp}.$
For instance, given $\Lambda=\{A,B,C\}$, the reaction $A+2B \to A+3C$ is the pair $\pair{(1,2,0)}{(1,0,3)}.$
A \emph{(finite) chemical reaction network (CRN)} is a pair $\calC=(\Lambda,R)$, where $\Lambda$ is a finite set of chemical \emph{species},
%$\Sigma \subseteq \Lambda$ be a finite \emph{input alphabet} (a.k.a., \emph{set of input species}), $\phi:\Lambda\to\{0,1\}$ is the \emph{output function},
and $R$ is a finite set of reactions over $\Lambda$.
A \emph{configuration} of a CRN $\calC=(\Lambda,R)$ is a vector $\vc \in \N^\Lambda$.
We also write $\#_\vc X$ to denote $\vc(X)$, the \emph{count} of species $X$ in configuration $\vc$, or simply $\# X$ when $\vc$ is clear from context.

Given a configuration $\vc$ and reaction $\alpha=\pair{\bfr}{\bfp}$, we say that $\alpha$ is \emph{applicable} to $\vc$ if $\bfr \leq \vc$ (i.e., $\vc$ contains enough of each of the reactants for the reaction to occur).
If $\alpha$ is applicable to $\vc$, then write $\alpha(\vc)$ to denote the configuration $\vc + \bfp - \bfr$ (i.e., the configuration that results from applying reaction $\alpha$ to $\vc$).
If $\vc'=\alpha(\vc)$ for some reaction $\alpha \in R$, we write $\vc \to_\calC \vc'$, or merely $\vc \to \vc'$ when $\calC$ is clear from context.
An \emph{execution} (a.k.a., \emph{execution sequence}) $\calE$ is a finite or infinite sequence of one or more configurations $\calE = (\vc_0, \vc_1, \vc_2, \ldots)$ such that, for all $i \in \{1,\ldots,|\calE|-1\}$, $\vc_{i-1} \to \vc_{i}$.
If a finite execution sequence starts with $\vc$ and ends with $\vc'$, we write $\vc \to_\calC^* \vc'$, or merely $\vc \to^* \vc'$ when the CRN $\calC$ is clear from context.
In this case, we say that $\vc'$ is \emph{reachable} from $\vc$.

%Let $\Delta \subseteq \Lambda$.
%We say that $\vp\in\N^{\Delta}$ is a \emph{partial configuration (with respect to $\Delta$)}.
%We write $\vp = \vc \rest \Delta$ for any configuration $\vc$ such that $\vc(X)=\vp(X)$ for all $X \in \Delta$, and we say that $\vp$ is the \emph{restriction of $\vc$ to $\Delta$}.
%%Define the \emph{partial configuration $\vc$ with respect to $\Delta\subseteq \Lambda$}, written $\vc \rest \Delta \in \N^\Delta$, to be such that $(\vc \rest \Delta)(X) = \vc(X)$ for all $X \in \Delta$.
%%We say $\vc \rest \Delta$ is a \emph{partial configuration}.
%Say that a partial configuration $\vp$ with respect to $\Delta$ is \emph{reachable} from configuration $\vc'$ if there is a configuration $\vc$ reachable from $\vc'$ and $\vp = \vc \rest \Delta$.
%In this case, we write $\vc' \to^* \vp$.

Turing machines, for example, have different semantic interpretations depending on the computational task under study (deciding a language, computing a function, etc.).
Similarly, in this paper we use CRNs to decide subsets of $\N^k$ (for which we reserve the term ``chemical reaction \emph{decider}'' or CRD) and to compute functions $f:\N^k\to\N^l$ (for which we reserve the term ``chemical reaction \emph{computer}'' or CRC).
In the next two subsections we define two semantic interpretations of CRNs that correspond to these two tasks.
We use the term CRN to refer to either a CRD or CRC when the statement is applicable to either type.

These definitions differ slightly from those of~\cite{CheDotSol12}, because ours are specialized to ``leaderless'' CRNs: those that can compute a predicate or function in which no species are present in the initial configuration other than the input species.
In the terminology of~\cite{CheDotSol12}, a CRN with species set $\Lambda$ and input species set $\Sigma$ is \emph{leaderless} if it has an \emph{initial context} $\sigma:\Lambda\setminus\Sigma \to \N$ such that $\sigma(S)=0$ for all $S\in\Lambda\setminus\Sigma.$
The definitions below are simplified by assuming this to be true of all CRNs.

We also use the convention of Angluin, Aspnes, and Eisenstat~\cite{AngluinAE06} that for a CRD, all species ``vote'' yes or no, rather than only a subset of species as in~\cite{CheDotSol12}, since this convention is convenient for proving time bounds.

\subsection{Stable decidability of predicates} \label{subsec-prelim-pred}
We now review the definition of stable decidability of predicates introduced by Angluin, Aspnes, and Eisenstat~\cite{AngluinAE06}.%
\footnote{Those authors use the term ``stably \emph{compute}'', but we reserve the term ``compute'' to apply to the computation of non-Boolean functions. Also, we omit discussion of the definition of stable computation used in the population protocols literature, which employs a notion of ``fair'' executions; the definitions are proven equivalent in~\cite{CheDotSol12}.}
Intuitively, the set of species is partitioned into two sets: those that ``vote'' yes and those that vote no,
and the system stabilizes to an output when a consensus vote is reached (all positive-count species have the same vote) that can no longer be changed (no species voting the other way can ever again be produced).
It would be too strong to characterize deterministic correctness by requiring all possible executions to achieve the correct answer; for example, a reversible reaction such as $A \longrightleftharpoons B$ could simply be chosen to run back and forth forever, starving any other reactions.
In the more refined definition that follows, the determinism of the system is captured in that it is impossible to stabilize to an incorrect answer, and the correct stable output is always reachable.

A \emph{(leaderless) chemical reaction decider} (CRD) is a tuple $\calD=(\Lambda,R,\Sigma,\Upsilon)$, where $(\Lambda,R)$ is a CRN, $\Sigma \subseteq \Lambda$ is the \emph{set of input species}, and $\Upsilon \subseteq \Lambda$ is the set of \emph{yes voters}, with species in $\Lambda \setminus \Upsilon$ referred to as \emph{no voters}.
An input to $\calD$ will be an \emph{initial configuration} $\vi\in\N^\Sigma$ (equivalently, $\vi\in\N^k$ if we write $\Sigma = \{X_1,\ldots,X_k\}$ and assign $X_i$ to represent the $i$'th coordinate); that is, only input species are allowed to be non-zero.
If we are discussing a CRN understood from context to have a certain initial configuration $\vi$, we write $\#_0 X$ to denote $\vi(X)$.

We define a global output partial function $\Phi:\N^\Lambda \dashrightarrow \{0,1\}$ as follows.
$\Phi(\vc)$ is undefined if either $\vc = \vec{0}$, or if there exist $S_0 \in \Lambda \setminus \Upsilon$ and $S_1\in\Upsilon$ such that $\vc(S_0) > 0$ and $\vc(S_1) > 0$.
Otherwise, either $(\forall S \in \Lambda) (\vc(S)>0 \implies S \in \Upsilon)$ or $(\forall S \in \Lambda) (\vc(S)>0 \implies S \in \Lambda\setminus\Upsilon)$; in the former case, the \emph{output} $\Phi(\vc)$ of configuration $\vc$ is 1, and in the latter case, $\Phi(\vc)=0$.

A configuration $\vo$ is \emph{output stable} if $\Phi(\vo)$ is defined and, for all $\vc$ such that $\vo \to^* \vc$, $\Phi(\vc) = \Phi(\vo)$.
We say a CRD $\calD$ \emph{stably decides} the predicate $\psi:\N^\Sigma \to \{0,1\}$ if,
for any initial configuration $\vi \in \N^k$, for all configurations $\vc\in\N^\Lambda$,
$\vi \to^* \vc$ implies $\vc\to^*\vo$ such that $\vo$ is output stable and $\Phi(\vo) = \psi(\vi)$.
Note that this condition implies that no incorrect output stable configuration is reachable from $\vi$.
We say that $\calD$ \emph{stably decides} a set $A\in\N^k$ if it stably decides its indicator function.

The following theorem is due to Angluin, Aspnes, and Eisenstat~\cite{AngluinAE06}:

\begin{theorem}[\cite{AngluinAE06}] \label{thm-semilinear}
A set $A \subseteq \N^k$ is stably decidable by a CRD if and only if it is semilinear.
\end{theorem}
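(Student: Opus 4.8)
The statement is an equivalence, so the plan is to prove each direction separately, with the ``semilinear $\Rightarrow$ stably decidable'' direction being the constructive one and the converse being the analytic one. For sufficiency, I would first reduce the problem to two manageable pieces using the fact that every semilinear set is a finite Boolean combination of \emph{atomic} predicates of two kinds: threshold predicates of the form $\sum_{i=1}^k a_i \vx(i) \geq b$ and modular predicates of the form $\sum_{i=1}^k a_i \vx(i) \equiv c \pmod{m}$, where $a_i, b, c, m$ are integer constants (this is the Presburger characterization of semilinearity). It then suffices to (i) exhibit a leaderless CRD for each atomic predicate and (ii) show that the class of stably decidable predicates is closed under negation, conjunction, and disjunction.

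For the closure properties, negation is immediate: given a CRD $\calD=(\Lambda,R,\Sigma,\Upsilon)$ stably deciding $\psi$, the CRD $(\Lambda,R,\Sigma,\Lambda\setminus\Upsilon)$ obtained by swapping the yes- and no-voters stably decides $\neg\psi$. For conjunction and disjunction I would run two CRDs in parallel on disjoint copies of the species (each input molecule initially spawning one molecule for each sub-CRD), and then take a product of the vote records so that every molecule carries the pair of sub-votes; a combining rule propagates the desired logical function of the two consensus votes, stabilizing only once both components have themselves stabilized. For the atomic predicates, the modular case is the cleaner: molecules carry a residue in $\Z/m\Z$, and a reaction combining two molecules replaces them with their additive combination modulo $m$, so that the surviving residue (agreed upon by all molecules) equals $\sum_i a_i \vx(i) \bmod m$, which is compared to $c$ to set the vote. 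For the threshold case the idea is to represent the signed sum by ``charges'': positive and negative contributions annihilate when they meet, and the magnitude of the surviving charge is saturated at a constant depending on $b$, so that the sign of what remains determines membership; the main care is to guarantee leaderlessness and that the correct consensus is always reachable and can never be reversed, matching the stable-decidability definition.

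The converse direction, that every stably decidable $A$ is semilinear, is where I expect the real difficulty to lie, since reachability sets of general CRNs (equivalently, vector addition systems) need not be semilinear, so the argument must exploit the stability hypothesis essentially rather than reachability alone. The plan is to use a pumping argument driven by Dickson's lemma: the well-quasi-ordering of configuration space forces, along any infinite family of inputs, a comparable pair $\vi \leq \vi'$ whose difference can be ``pumped,'' and the two halves of the stability definition --- a correct output-stable configuration is always reachable, and no incorrect output-stable configuration is ever reachable --- together constrain how the output can change as the input grows in a fixed direction. The goal is to show that $A$ is \emph{eventually periodic} in a uniform sense along every direction, and then to invoke the characterization that such uniformly eventually periodic subsets of $\N^k$ are exactly the semilinear (Presburger-definable) ones. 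The hard part is the bookkeeping of the pumping: one must pump input molecules while simultaneously transporting a witnessing execution to a stable configuration, and argue that truncating counts at a constant (justified again by Dickson's lemma) loses no essential information, so that the accepting set collapses to a finite union of periodic pieces.
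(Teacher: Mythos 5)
This theorem is not proved in the paper at all: it is imported verbatim from Angluin, Aspnes, and Eisenstat \cite{AngluinAE06}, with the paper only remarking that the model differences are immaterial. So there is no in-paper proof to compare against, and the right benchmark is the proof in the cited literature --- which your outline does in fact track. The sufficiency direction is essentially the standard construction: decompose via Presburger's theorem into threshold and remainder predicates, build a leaderless protocol for each (residue-combining reactions for the mod case, saturated signed accumulation for the threshold case), and close under Boolean operations by running the component CRDs in parallel and propagating the pair of current votes epidemically. Two small cautions there: with the all-species-vote convention used in this paper, your ``product of vote records'' step needs the explicit epidemic argument that a molecule's belief about the \emph{other} component's vote eventually stabilizes to the true consensus (votes can flip finitely many times before the sub-CRDs stabilize, so you must argue no incorrect consensus is ever output-stable); and negation by swapping $\Upsilon$ with $\Lambda\setminus\Upsilon$ is only immediate under this all-voters convention, which is exactly why the paper adopts it.

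The necessity direction is where your proposal stops being a proof and becomes a plan. You correctly identify that reachability-set semilinearity fails for general vector addition systems and that the argument must lean on output stability, and you name the right tools (Dickson's/Higman's lemma, truncation of counts at a constant, pumping a comparable pair of inputs together with a witnessing execution to a stable configuration). But this is precisely the content of an entire separate paper (``Stably computable predicates are semilinear''), and the step you flag as ``bookkeeping'' --- showing that truncation loses nothing and that the accepted set is uniformly eventually periodic in every direction --- is the whole difficulty; it requires a delicate double application of the well-quasi-ordering, first to the set of output-stable configurations (to get a finite set of minimal ``dead'' configurations) and then to the reachability relation restricted to those. As a roadmap your sketch is faithful to the known proof; as a proof of the converse it has a genuine gap exactly where you say it does.
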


The model they use is defined in a slightly different way; the differences (and those differences' lack of significance to the questions we explore) are explained in~\cite{CheDotSol12}.

\subsection{Stable computation of functions}

We now define a notion of stable computation of \emph{functions} similar to those above for predicates.
Intuitively, the inputs to the function are the initial counts of input species $X_1,\ldots,X_k$, and the outputs are the counts of output species $Y_1,\ldots,Y_l$.
The system stabilizes to an output when the counts of the output species can no longer change.
Again determinism is captured in that it is impossible to stabilize to an incorrect answer and the correct stable output is always reachable.

A \emph{(leaderless) chemical reaction computer (CRC)} is a tuple $\calC=(\Lambda,R,\Sigma,\Gamma)$, where $(\Lambda,R)$ is a CRN, $\Sigma \subset \Lambda$ is the \emph{set of input species}, $\Gamma \subset \Lambda$ is the \emph{set of output species}, such that $\Sigma \cap \Gamma = \emptyset$.
By convention, we let $\Sigma = \{X_1,X_2,\ldots,X_k\}$ and $\Gamma = \{Y_1,Y_2,\ldots,Y_l\}$.
We say that a configuration $\vo$ is \emph{output stable} if, for every $\vc$ such that $\vo \to^* \vc$ and every $Y_i \in \Gamma$, $\vo(Y_i) = \vc(Y_i)$ (i.e., the counts of species in $\Gamma$ will never change if $\vo$ is reached).
As with CRD's, we require initial configurations $\vi \in \N^\Sigma$ in which only input species are allowed to be positive.
We say that $\calC$ \emph{stably computes} a function $f:\N^k\to\N^l$ if for any initial configuration $\vi \in \N^{\Sigma}$,
$\vi \to^* \vc$ implies $\vc\to^*\vo$ such that $\vo$ is an output stable configuration with $f(\vi) = (\vo(Y_1), \vo(Y_2), \ldots, \vo(Y_l))$.
Note that this condition implies that no incorrect output stable configuration is reachable from $\vi$.

If a CRN stably decides a predicate or stably computes a function, we say the CRN is \emph{stable} (a.k.a. \emph{deterministic}).

%As an example of a formally defined CRC consider the function $f(x) =   \floor{x/2}$ shown in Fig.~\ref{fig:examples}(a).
%This function is stably computed by the CRC $(\Lambda,R,\Sigma,\Gamma,\sigma)$ where $(\Lambda,R)$ is the CRN consisting of a single reaction $2 X \to Y$, $\Sigma = \{X\}$ is the set of input species, $\Gamma = \{Y\}$ is the set of output species, and the initial context $\sigma$ is zero for all species in $\Lambda\setminus\Sigma$.
%In Fig.~\ref{fig:examples}(b) the initial context $\sigma(N)=1$, and is zero for all other species in $\Lambda\setminus\Sigma$.
%\repds{}{In examples (a) and (b), there is at most one reaction that can happen in any reachable configuration.
%In contrast, different reactions may occur next in (c).
%However, from any reachable configuration, we can reach the output stable configuration with the correct amount of $Y$, satisfying our definition of stable computation.}
%
%In Sections~\ref{sec:detfunc} and \ref{sec:detfuncspeed} we will describe systematic (but much more complex) constructions for these and all functions with semilinear graphs.

%\subsection{Fair execution sequences}
%\input{prelim-fairness}

\subsection{Kinetic model}

The following model of stochastic chemical kinetics is widely used in quantitative biology and other fields dealing with chemical reactions between species present in small counts~\cite{Gillespie77}.
It ascribes probabilities to execution sequences, and also defines the time of reactions, allowing us to study the computational complexity of the CRN computation in Section~\ref{sec-results}.

In this paper, the rate constants of all reactions are $1$, and we define the kinetic model with this assumption.
The rate constants do not affect the definition of stable computation; they only affect the time analysis.
%Any stable CRN, when executed under the kinetic model with arbitrary positive rate constants, will reach a correct output-stable configuration with probability 1, so the choice of rate constants equal to 1 is arbitrary.
Our time analyses remain asymptotically unaffected if the rate constants are changed (although the constants hidden in the big-$O$ notation would change).
A reaction is \emph{unimolecular} if it has one reactant and \emph{bimolecular} if it has two reactants.
We use no higher-order reactions in this paper.

The kinetics of a CRN is described by a continuous-time Markov process as follows.
Given a fixed volume $v \in \R^+$ and current configuration $\vc$, the \emph{propensity} of a unimolecular reaction $\alpha : X \to \ldots$ in configuration $\vc$ is $\rho(\vc, \alpha) = \#_\vc X$.
The propensity of a bimolecular reaction $\alpha : X + Y \to \ldots$, where $X \neq Y$, is $\rho(\vc, \alpha) = \frac{\#_\vc X \#_\vc Y}{v}$.
The propensity of a bimolecular reaction $\alpha : X + X \to \ldots$ is $\rho(\vc, \alpha) = \frac{1}{2} \frac{\#_\vc X (\#_\vc X - 1)}{v}$.
The propensity function determines the evolution of the system as follows.
%If the system is in configuration $\vc$, no further reactions are possible if no reaction is applicable.
%Otherwise,
The time until the next reaction occurs is an exponential random variable with rate $\rho(\vc) = \sum_{\alpha \in R} \rho(\vc,\alpha)$ (note that $\rho(\vc)=0$ if no reactions are applicable to $\vc$).

%\todo{The next paragraph justifies why we ignore some reactions in the proofs of lemmas in the appendix when analyzing the expected time for other reactions to occur.}
%Moreover, for the reactions in $R'$, where $R' \subset R$, assuming that the reactions in $R \setminus R'$ either do not affect the reactants of reactions in $R'$, or that if they do, those reactions are not applicable, then the time until the next reaction in $R'$ occurs is also an exponential random variable with rate $\rho(\vc) = \sum_{\beta \in R'} \rho(\vc,\beta)$.
%The probability that next reaction will be a particular $\alpha_{\text{next}}$ is $\frac{\rho(\vc,\alpha_{\text{next}})}{\rho(\vc)}$.
%This observation allows us to analyze different independent components of a CRN as if they were their own CRN, so long as they either run in parallel without affecting each others' reactants, or they run in series, but under the assumption that one component has finished reacting (and therefore cannot affect reactant counts for the next component).

The kinetic model is based on the physical assumption of well-mixedness valid in a dilute solution.
Thus, we assume the \emph{finite density constraint}, which stipulates that a volume required to execute a CRN must be proportional to the maximum molecular count obtained during execution~\cite{SolCooWinBru08}.
In other words, the total concentration (molecular count per volume) is bounded.
This realistically constrains the speed of the computation achievable by CRNs.
Note, however, that it is problematic to define the kinetic model for CRNs in which the reachable configuration space is unbounded for some start configurations, because this means that arbitrarily large molecular counts are reachable.\footnote{One possibility is to have a ``dynamically" growing volume as in~\cite{SolCooWinBru08}.}
We apply the kinetic model only to CRNs with  configuration spaces that are bounded for each start configuration, choosing the volume to be equal to the reachable configuration with the highest molecular count (in this paper, this will always be within a constant multiplicative factor of the number of input molecules).

It is not difficult to show that if a CRN is stable and has a finite reachable configuration space from any initial configuration $\vi$, then under the kinetic model (in fact, for any choice of rate constants), with probability 1 the CRN will eventually reach an output stable configuration.

We require the following lemmas, which are proven in Appendix~\ref{sec-appendix}.

\begin{lem}\label{lem-unimol-log}
  \emph{Let $\mathcal{A} = \{A_1,\ldots,A_m\}$ be a set of species with the property that they appear only in applicable reactions of the form $A_i \to \sum_l B_l$, where $B_l \not\in \mathcal{A}$. Then starting from a configuration $\vc$ in which for all $i \in \{1,\ldots,m\}$, $\#_\vc A_i = O(n)$, with volume $O(n)$, the expected time to reach a configuration in which none of the described reactions can occur is $O(\log n)$.}
\end{lem}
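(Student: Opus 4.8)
The plan is to reduce the claim to the elementary fact that a pure death process empties in expected logarithmic time, and to show that the rest of the (possibly large) CRN is irrelevant to the timing of that process. First I would record two structural observations. By hypothesis no reaction produces a species of $\mathcal{A}$: each $A_i$ occurs only as the single reactant of a unimolecular reaction $A_i \to \sum_l B_l$, and every product lies outside $\mathcal{A}$. Hence each such reaction decreases the quantity $N(\vc) := \sum_{i=1}^{m} \#_\vc A_i$ by exactly $1$ and never increases it, no other reaction changes any $\#_\vc A_i$, and ``none of the described reactions can occur'' is equivalent to $N(\vc) = 0$. So it suffices to bound the expected time until $N$ first reaches $0$, starting from $N(\vc) = \sum_i \#_\vc A_i = O(n)$, where we use that $m$ is a constant and each $\#_\vc A_i = O(n)$. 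I would also note that, since all these reactions are unimolecular, their propensities equal the reactant counts and are independent of the volume, so the stated volume bound is immaterial here.

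The crux is the following decoupling. The total propensity of the $\mathcal{A}$-reactions in a configuration $\vc$ is $\rho_\mathcal{A}(\vc) = \sum_i r_i\,\#_\vc A_i$, where $r_i \geq 1$ is the number of reactions having $A_i$ as reactant; this depends only on the counts of species in $\mathcal{A}$, and those counts are changed only by $\mathcal{A}$-reactions. Using the competing-clocks (random time change) representation of the kinetic model, I would argue that the waiting time until the next $\mathcal{A}$-reaction is $\mathrm{Exp}(\rho_\mathcal{A}(\vc))$ regardless of how many non-$\mathcal{A}$ reactions fire in the interim: those other reactions alter neither the $\mathcal{A}$-counts nor $\rho_\mathcal{A}$, hence neither reset nor slow the $\mathcal{A}$-clocks. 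This lets us treat $N(t)$ as an \emph{autonomous} death process rather than reasoning about the full joint chain, and establishing it cleanly is the step that requires the most care, although it is standard once phrased via superposition/thinning of exponential clocks.

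Granting the decoupling, the estimate is routine. While $N = j$ the counts of $\mathcal{A}$ are frozen until the next $\mathcal{A}$-reaction, which fires after an $\mathrm{Exp}(\rho_\mathcal{A})$ time with $\rho_\mathcal{A} \geq \sum_i \#_\vc A_i = j$; by the strong Markov property these holding times are independent across the $N(\vc)$ successive unit decrements. Thus the emptying time is stochastically dominated by $\sum_{j=1}^{N(\vc)} E_j$ with independent $E_j \sim \mathrm{Exp}(j)$, and since stochastic domination preserves the inequality of expectations, the expected emptying time is at most $\sum_{j=1}^{N(\vc)} 1/j = \ln N(\vc) + O(1) = O(\log n)$ (equivalently, this is the expected maximum of $N(\vc)$ i.i.d.\ rate-$1$ exponentials). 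This gives the claimed $O(\log n)$ bound, with only the harmonic-sum estimate left as routine calculation.
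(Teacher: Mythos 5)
Your proof is correct and follows essentially the same route as the paper's: track $N=\sum_i \# A_i$, note each relevant reaction decrements it by one, bound the waiting time at level $k$ by $1/k$, and sum the harmonic series to get $O(\log n)$. The only difference is that you make explicit the decoupling from the rest of the CRN and the $r_i\ge 1$ multiplicity, which the paper's terser proof leaves implicit; this is added care, not a different argument.
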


\begin{lem}\label{lem-leader-election}
  \emph{Let $\mathcal{A} = \{A_1,\ldots,A_m\}$ be a set of species with the property that they appear only in applicable reactions of the form $A_i + A_j \to A_k + \sum_l B_l$, where $B_l \not\in \mathcal{A}$, and for all $i,j \in \{1,\ldots,m\}$, there is at least one reaction $A_i + A_j \to \ldots$. Then starting from a configuration $\vc$ in which for all $i \in \{1,\ldots,m\}$, $\#_\vc A_i = O(n)$, with volume $O(n)$, the expected time to reach a configuration in which none of the described reactions can occur is $O(n)$.}
\end{lem}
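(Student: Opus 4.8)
The plan is to treat the reactions acting on $\mathcal{A}$ as a leader-election process and bound its completion time by a telescoping sum of per-step waiting times. First I would observe that every reaction of the given form has two reactants in $\mathcal{A}$ and exactly one product in $\mathcal{A}$, so each such reaction decreases the total number $s$ of $\mathcal{A}$-molecules in the current configuration by exactly one; conversely, the hypothesis that some reaction $A_i + A_j \to \ldots$ exists for every pair $(i,j)$ (including $i = j$) guarantees that whenever $s \geq 2$ at least one reaction is applicable. Hence the described reactions stop precisely when $s \leq 1$, and reaching such a configuration requires exactly $s_0 - 1$ reactions, where $s_0 = \sum_{i=1}^m \#_\vc A_i = O(n)$ is the initial total count (using that $m$ is constant). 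Because species in $\mathcal{A}$ appear only in these reactions, the propensity of each such reaction depends only on the counts of $\mathcal{A}$-species, and those counts change only when one of these reactions fires. I would use this to argue that the projection of the continuous-time Markov chain onto the $\mathcal{A}$-counts is itself a Markov chain: any other reactions of the CRN leave the $\mathcal{A}$-counts, and hence the $\mathcal{A}$-propensities, unchanged, so by memorylessness of the exponential clocks they do not affect the distribution of the time spent at each value of $s$ before the next $\mathcal{A}$-reaction.

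The key quantitative step is a lower bound on the total propensity $\rho_\mathcal{A}(\vc) = \sum_\alpha \rho(\vc,\alpha)$ of the applicable described reactions, uniform over all configurations with a given total count $s$. Writing $a_i$ for the current count of $A_i$ (so $\sum_i a_i = s$), I would keep, for each unordered pair of species indices $\{i,j\}$, just one of the guaranteed reactions and sum its propensity: for $i \neq j$ this contributes $a_i a_j / v$, and for $i = j$ it contributes $\tfrac{1}{2} a_i(a_i-1)/v$. The identity
\[
  \sum_{i<j} a_i a_j + \sum_i \binom{a_i}{2} = \binom{s}{2},
\]
which counts the unordered pairs among the $s$ molecules according to whether the two molecules have the same type, then gives $\rho_\mathcal{A}(\vc) \geq \binom{s}{2}/v = s(s-1)/(2v)$; additional reactions for a pair only increase the propensity. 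Consequently the expected waiting time at total count $s$ is at most $1/\rho_\mathcal{A}(\vc) \leq 2v/\big(s(s-1)\big)$.

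Finally, since $s$ strictly decreases and visits each value in $\{s_0, s_0-1, \ldots, 2\}$ exactly once, linearity of expectation bounds the total expected time by
\[
  \sum_{s=2}^{s_0} \frac{2v}{s(s-1)} = 2v \sum_{s=2}^{s_0}\left(\frac{1}{s-1}-\frac{1}{s}\right) = 2v\left(1-\frac{1}{s_0}\right) < 2v = O(n),
\]
using $v = O(n)$. The telescoping makes the bottleneck explicit: the dominant term is the final step from $s=2$ to $s=1$, costing expected time about $v$, i.e.\ the time for the last two $\mathcal{A}$-molecules to find each other in a volume of size $\Theta(n)$. The step I expect to require the most care is the decoupling argument --- rigorously justifying that the other reactions of the CRN change neither the $\mathcal{A}$-propensities nor the per-step waiting-time distribution, so that the marginal process on the $\mathcal{A}$-counts really is the leader-election chain analyzed above; given that, the propensity identity and the telescoping sum are routine.
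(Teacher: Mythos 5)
Your proposal is correct and follows essentially the same route as the paper's proof: each reaction decreases the total $\mathcal{A}$-count by one, the aggregate propensity at total count $s$ is $s(s-1)/(2v)$ (your combinatorial identity is just the paper's algebraic computation in disguise), and the telescoping sum $\sum_s 2v/(s(s-1))$ gives $O(n)$. If anything you are slightly more careful than the paper, treating the propensity as a lower bound when a pair admits several reactions and starting the sum at $s=2$ rather than $k=1$.
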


\begin{lem}\label{lem-populous-species-eats-other}
  \emph{Let $\mathcal{C} = \{C_1,\ldots,C_m\}$ and $\mathcal{A} = \{A_1,\ldots,A_p\}$ be two sets of species with the property that they appear only in applicable reactions of the form $C_i + A_j \to C_i + \sum_l B_l$, where $B_l \not\in \mathcal{A}$. Then starting from a configuration $\vc$ in which for all $i \in \{1,\ldots,m\}$, $\#_\vc A_i = O(n)$ and $\#_0 C_i = \Omega(n)$, with volume $O(n)$, the expected time to reach a configuration in which none of the described reactions can occur is $O(\log n)$.}
\end{lem}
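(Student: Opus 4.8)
The plan is to reduce the analysis to a sequence of independent ``phases,'' one for each species $A_j$, and within each phase bound the time by a coupon-collector--style argument exploiting the fact that the catalytic species $C_i$ are abundant ($\Omega(n)$) and never consumed. The key structural observation is that every reaction has the form $C_i + A_j \to C_i + \sum_l B_l$ with $B_l \notin \mathcal{A}$; hence the counts of the $C_i$ are invariant throughout (each $C_i$ appears as both a reactant and a product), and the only effect of any reaction is to strictly decrease the total count $\|\vc\rest_{\mathcal{A}}\|$ of $\mathcal{A}$-species by exactly one. In particular the process is monotone and terminates after at most $O(n)$ reactions, so the issue is purely one of \emph{rate}, not of reachability.

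First I would fix the invariant $C := \sum_{i=1}^m \#_\vc C_i = \Omega(n)$ and note it holds for all reachable configurations, since no reaction changes any $\#_\vc C_i$. Next, for a fixed $A_j$ present in count $a := \#_\vc A_j > 0$, I would lower-bound the total propensity of all reactions that consume a copy of $A_j$. Because for each such $A_j$ there is at least one applicable reaction $C_i + A_j \to \ldots$ (guaranteed once some $C_i$ with the relevant reaction is present — here I must be slightly careful and use that the reactions are stated to be ``applicable''), the propensity of consuming an $A_j$ molecule is at least $\frac{1}{v}\,\#_\vc C_{i}\,\#_\vc A_j$ for the relevant catalyst. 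Summing a suitable lower bound over the available catalysts and using $C = \Omega(n)$ together with $v = O(n)$, the rate at which a \emph{given} $A_j$ molecule is consumed is $\Omega(1)$, independent of $n$. This is the crux: abundant catalysts in a volume proportional to $n$ give each substrate molecule an $\Omega(1)$ decay rate.

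Given an $\Omega(1)$ per-molecule decay rate, I would then bound the time to eliminate all copies of a single $A_j$. Treating the $\#_\vc A_j$ copies as each decaying at rate $\Omega(1)$, the expected time to remove all of them from an initial count $O(n)$ is $O(\log n)$, by the same harmonic-sum estimate used in Lemma~\ref{lem-unimol-log}: when $t$ copies remain the total consumption rate is $\Omega(t)$, so the expected waiting time to drop from $t$ to $t-1$ is $O(1/t)$, and $\sum_{t=1}^{O(n)} O(1/t) = O(\log n)$. Since there are only $p = O(1)$ species in $\mathcal{A}$, and consuming one species can only help (never hinder) the consumption of another — the catalysts are shared and undepleted — the total expected time is the sum over the $O(1)$ species, namely $O(\log n)$.

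The main obstacle I anticipate is the dependence between the decay processes of the different $A_j$'s: the substrates share the same catalyst pool and, more importantly, a single reaction consumes one specific $A_j$, so one cannot simply treat the $m$ or $p$ species as fully independent exponential clocks. I would handle this by working with the aggregate quantity $\sum_j \#_\vc A_j$ and arguing that its total decay propensity is $\Omega\!\big(\frac{C}{v}\sum_j \#_\vc A_j\big) = \Omega\!\big(\sum_j \#_\vc A_j\big)$ whenever any $A_j$ is present, which collapses the multi-species analysis to a single harmonic sum and avoids the need for independence; the $O(1)$ bound on $m$ and $p$ only enters through the constant hidden in the $\Omega$. A secondary subtlety is confirming that the hypothesis ``they appear only in applicable reactions'' indeed guarantees a positive-propensity consuming reaction for every positive-count $A_j$, so that no $A_j$ molecule can become permanently stuck; I would state this explicitly as the reason the process reaches a configuration in which none of the reactions can occur.
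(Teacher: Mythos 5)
Your proposal is correct and follows essentially the same route as the paper's (much terser) proof: the undepleted $\Omega(n)$ catalyst pool in volume $O(n)$ gives total consumption propensity $\Omega(k)$ when $k$ molecules of $\mathcal{A}$ remain, so the harmonic sum from Lemma~\ref{lem-unimol-log} yields $O(\log n)$. Your closing move of aggregating over $\sum_j \#_\vc A_j$ rather than per-species phases is exactly the paper's argument; the only nuance you state slightly differently is that the $C_i$ counts need not be invariant (some $B_l$ may lie in $\mathcal{C}$) but can only increase, which the paper notes only strengthens the conclusion.
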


\section{Leaderless CRCs can compute semilinear functions}
\label{sec-results}

To supply an input vector $\vx\in\N^k$ to a CRN, we use an initial configuration with $\vx(i)$ molecules of input species $X_i$.
Throughout this section, we let $n = ||\vx||_1 = \sum_{i=1}^k \vx(i)$ denote the initial number of molecules in solution.
Since all CRNs we employ have the property that they produce at most a constant multiplicative factor more molecules than are initially present, this implies that the volume required to satisfy the finite density constraint is $O(n)$.

Suppose the CRC $\calC$ stably computes a function $f:\N^k\dashrightarrow\N^l$.
We say that $\calC$ stably computes $f$ \emph{monotonically} if its output species are not consumed in any reaction.\footnote{Its output species could potentially be reactants so long as they are catalytic, meaning that the stoichiometry of the species as a product is at least as great as its stoichiometry as a reactant, e.g. $X + Y \to Z + Y$ or $A + Y \to Y + Y$.}

We show in Lemma~\ref{lem-affine-fast} that affine partial functions can be computed in expected time $O(n)$ by a leaderless CRC.
For its use in proving Theorem~\ref{thm-semilinear-function-n-no-leader}, we require that the output molecules be produced monotonically.
This is impossible for general affine partial functions.
For example, consider the function $f(x_1,x_2) = x_1 - x_2$ where $\dom f = \setl{(x_1,x_2)}{x_1 \geq x_2}$.
By withholding a single copy of $X_2$ and letting the CRC stabilize to the output value $\# Y = x_1-x_2+1$, then allowing the extra copy of $X_2$ to interact, the only way to stabilize to the correct output value $x_1-x_2$ is to consume a copy of the output species $Y$.
Therefore Lemma~\ref{lem-affine-fast} is stated in terms of an encoding of affine partial functions that allows monotonic production of outputs, encoding the output value $\vy(j)$ as the difference between the counts of two monotonically produced species $Y_j^P$ and $Y_j^C$, a concept formalized by the following definition.

Let $f:\N^k\dashrightarrow\N^l$ be a partial function.
We say that a partial function $\hf:\N^k\dashrightarrow\N^l\times\N^l$ is a \emph{diff-representation} of $f$ if $\dom f = \dom \hf$ and, for all $\vx\in\dom f$, if $(\vy_P,\vy_C) = \hf(\vx)$, where $\vy_P,\vy_C \in \N^l$, then $f(\vx) = \vy_P - \vy_C$, and $\vy_P = O(f(\vx))$.
In other words, $\hf$ represents $f$ as the difference of its two outputs $\vy_P$ and $\vy_C$, with the larger output $\vy_P$ possibly being larger than the original function's output, but at most a multiplicative constant larger.

The following lemma is the main technical result required for proving our main theorem, Theorem~\ref{thm-semilinear-function-n-no-leader}.
It shows that every affine function can be computed (via a diff-representation) in time $O(n)$ by a leaderless CRC.

\begin{lem} \label{lem-affine-fast}
  Let $f:\N^k\dashrightarrow\N^l$ be an affine partial function.
  Then there is a diff-representation $\hf:\N^k\dashrightarrow\N^l\times\N^l$ of $f$ and a leaderless CRC that monotonically stably computes $\hf$ in expected time $O(n)$.
\end{lem}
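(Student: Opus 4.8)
The plan is to realize the integer form of affine functions from the Preliminaries: for each output coordinate $j$ I must produce counts whose difference equals $\vy(j) = b_j + \frac{1}{d_j}\sum_i n_{i,j}(\vx(i)-c_i)$. First I would clear the denominator once and for all. Writing $\kappa_j = d_j b_j - \sum_i n_{i,j} c_i \in \Z$, the domain condition $\vx(i)\ge c_i$ forces $\vy(j) = T_j/d_j$, where $T_j = \sum_i n_{i,j}\vx(i) + \kappa_j$ is a \emph{nonnegative integer divisible by} $d_j$. So the task reduces to (a) monotonically building the signed integer $T_j$ in a dual-rail pair of species $(Z_j^+,Z_j^-)$ with net count $\#Z_j^+ - \#Z_j^- = T_j$, and (b) dividing the net by the constant $d_j$ into an output rail $(Y_j^P,Y_j^C)$, so that $\#Y_j^P - \#Y_j^C = \vy(j)$.

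For the linear part of $T_j$, each input molecule fires a unimolecular reaction $X_i \to (\text{$|n_{i,j}|$ copies of the appropriate-sign } Z_j^{\pm}, \text{ over all } j) + G$, contributing $n_{i,j}\vx(i)$ with sign to rail $j$ and, as a side effect, minting one ``universal token'' $G$ per input molecule; by Lemma~\ref{lem-unimol-log} this completes in $O(\log n)$. The additive constant $\kappa_j$ is where leaderlessness bites, and I would extract it exactly using the leader-election idea from the introduction: since $G + G \to G + (\text{corrections})$ fires \emph{exactly} $n-1$ times, if the token-minting reaction also deposits $|\kappa_j|$ copies of one rail per input and each $G$-pairing deposits $|\kappa_j|$ copies of the opposite rail, the net contribution to $T_j$ is $\pm(|\kappa_j| n - |\kappa_j|(n-1)) = \kappa_j$ exactly, independently of reaction order and monotonically (we only ever add to the two rails). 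This single leader-election (valid for $n\ge 1$; the empty input is degenerate) is the only $\Omega(n)$-time step and is covered by Lemma~\ref{lem-leader-election}.

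The technical heart is step (b): an exact, order-independent, monotone, $O(n)$-time division by the constant $d_j$. A single catalytic counter cycling mod $d_j$ is exact but costs $\Theta(n\log n)$ (the harmonic sum as its fuel depletes), which is precisely the bound we are trying to beat; and flooring the two rails separately is not even correct, since $\lfloor A/d\rfloor - \lfloor B/d\rfloor \neq (A-B)/d$. Instead I would use a \emph{base-$d_j$ tournament}, identifying $Z_j^{\pm}$ with level-$0$ buckets $B^{\pm}_0,\dots,B^{\pm}_{d_j-1}$, where $B^{\pm}_a$ represents $a+1$ signed units. The reactions are same-sign merges $B^{\pm}_a + B^{\pm}_b \to B^{\pm}_{(a+b+1)\bmod d_j} + \lfloor (a+b+1)/d_j\rfloor$ copies of $Y_j^{P/C}$; opposite-sign cancellations $B^{+}_a + B^{-}_b$ that leave a single bucket of the surplus sign carrying $|a-b|$ units (and nothing when $a=b$); and unimolecular completions $B^{\pm}_{d_j-1}\to Y_j^{P/C}$. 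A potential argument shows that $\#Y_j^P - \#Y_j^C + \tfrac{1}{d_j}(\text{signed units remaining in buckets})$ is invariant, so in any configuration where no reaction applies (at most one bucket remains, and divisibility of $T_j$ forces it to be empty) the output rail satisfies $\#Y_j^P - \#Y_j^C = T_j/d_j = \vy(j)$, regardless of the order of merges, cancellations, and carries. Every bucket reaction has the form $A_i + A_j \to A_k + \sum_l B_l$, so Lemma~\ref{lem-leader-election} gives $O(n)$ completion and Lemma~\ref{lem-unimol-log} handles the unimolecular completions in $O(\log n)$.

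Finally I would assemble the pieces and check the three requirements. \emph{Monotonicity} is automatic since $Y_j^P, Y_j^C$ occur only as products. \emph{Stable correctness} follows because the invariants pin the net output to $\vy(j)$ in every terminal configuration, and because from any reachable configuration one can still drain all tokens and buckets to such a terminal configuration, so no incorrect output-stable configuration is reachable. The \emph{expected time} is the maximum over phases, $O(\log n) + O(n) + O(n) = O(n)$, via the three Appendix lemmas. The step I expect to be the real obstacle is designing the division/cancellation gadget to be simultaneously exact and order-independent while staying within the $O(n)$ budget; keeping remainders as buckets (rather than flooring and discarding) is what makes incomplete cancellation harmless, and routing cancellation through the buckets \emph{before} committing to outputs is also what lets me bound the size of the produced $Y_j^P$ as demanded by the diff-representation.
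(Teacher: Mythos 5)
Your proposal is essentially correct and lands close to the paper's construction in spirit: dual-rail outputs produced monotonically, a leader election on a token species to extract additive constants exactly (generalizing the $x+1$ example from the introduction), mod-$d_j$ merge cascades for the division, and the three Appendix lemmas giving $O(\log n)$ or $O(n)$ per phase. The genuine differences are these. The paper keeps the three constants separate --- it subtracts each $c_i$ with a dedicated merge gadget on species $C_{i,m}$ (producing exactly $\vx(i)-c_i$ copies of an intermediate $X_i'$) and handles $b_j$ with its own token $B_j$ --- whereas you fold everything into the single integer $\kappa_j=d_jb_j-\sum_i n_{i,j}c_i$ and one election; that is a clean and sound simplification. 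More importantly, your justification for the cross-rail cancellation gadget rests on a false premise: you assert that flooring the two rails separately fails because $\lfloor A/d\rfloor-\lfloor B/d\rfloor\neq(A-B)/d$ in general, but here $A-B=\sum_i n_{i,j}(\vx(i)-c_i)=d_j(\vy(j)-b_j)$, so $A\equiv B\pmod{d_j}$ and the difference of the two floors \emph{does} equal the exact quotient. The paper exploits exactly this: it divides $\#D_{j,1}^P$ and $\#D_{j,1}^C$ by $d_j$ with two independent merge cascades and lets the (equal) remainders sit inertly in one leftover molecule per rail.

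Your tournament-with-cancellation gadget is not wrong --- the signed-units invariant plus $d_j\mid T_j$ does force every terminal configuration to have no leftover bucket, hence the correct net output --- but it is unnecessary, and it opens a real (if patchable) gap in the time analysis: the bucket species do not satisfy the hypotheses of Lemma~\ref{lem-leader-election}, since the $a=b$ cancellation produces no surviving $A_k$ and the buckets $B_{d_j-1}^{\pm}$ are also reactants of unimolecular completion reactions. The $O(n)$ bound still holds (every pair of buckets can react, so the total propensity is at least that in the lemma's proof, and each reaction strictly decreases the number of buckets), but you must rerun that argument rather than cite the lemma as stated. A last small point: your closing claim that routing cancellation through buckets before committing to outputs bounds $\#Y_j^P$ as the diff-representation demands is also not right, since an adversarial reaction order can perform all same-sign merges before any cancellation; the bound actually used downstream, however, is only $\#Y_j^P=O(n)$ (to keep the volume linear), which both your construction and the paper's satisfy.
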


\begin{proof}
  If $f$ is affine, then there exist $kl$ integers $n_{1,1},\ldots,n_{k,l}\in\Z$ and $2l+k$ nonnegative integers $b_1,\ldots,b_l,c_1,\ldots,c_k,d_1,\ldots,d_l\in\N$ such that, if $\vy = f(\vx)$, then for each $j \in \{1,\ldots,l\}$,
  $\vy(j) = b_j + \frac{1}{d_j} \sum_{i=1}^k n_{i,j} (\vx(i) - c_i)$, and for each $i \in \{1,\ldots,k\}$, $\vx(i) - c_i \geq 0$.
  Define the CRC as follows.
  It has input species $\Sigma = \{X_1,\ldots,X_k\}$ and output species $\Gamma = \{Y^P_1,\ldots,Y^P_l,Y^C_1,\ldots,Y^C_l\}$.

  There are three main components of the CRN, separately handling the $c_i$ offset, the $n_{i,j}/d_j$ coefficient, and the $b_j$ offset.
  %Each component monotonically produces those molecules used by downstream components, so as observed in~\cite{CheDotSol12}, this implies they will work properly when executing in parallel.

  The latter two components both make use of $Y^C_j$ molecules to account for production of $Y^P_j$ molecules in excess of $\vy(j)$ to ensure that $\#_\infty Y^P_j - \#_\infty Y^C_j = \vy(j)$, which establishes that the CRC stably computes a diff-representation of $f$.
  It is clear by inspection of the reactions that $\#_\infty Y^P_j = O(\vy(j))$.

  Add the reaction
  \begin{equation} \label{rxn-X-to-C-and-Bs}
    X_1 \to C_{1,1} + B_1 + B_2 + \ldots + B_l + b_1 Y^P_1 + b_2 Y^P_2 + \ldots b_l Y^P_l
  \end{equation}
  The first product $C_{1,1}$ will be used to handle the $c_1$ offset, and the remaining products will be used to handle the $b_j$ offsets.
  For each $i \in \{2,\ldots,k\}$, add the reaction
  \begin{equation} \label{rxn-X-to-C}
    X_i \to C_{i,1}
  \end{equation}
  By Lemma~\ref{lem-unimol-log}, reactions~\eqref{rxn-X-to-C-and-Bs} and~\eqref{rxn-X-to-C} take time $O(\log n)$ to complete.

  We now describe the three components of the CRC separately.

  \begin{description}
  \item[\underline{$c_i$ offset}:]
      Reactions~\eqref{rxn-X-to-C-and-Bs} and~\eqref{rxn-X-to-C} produce $\vx(i)$ copies of $C_{i,1}$.
      We must reduce this number by $c_i$, producing $\vx(i)-c_i$ copies of $X'_i$, the species that will be used by the next component to handle the $n_{i,j}/d_j$ coefficient.
      A high-order reaction implementing this is $(c_i+1) C_{i,1} \to c_i C_{i,1} + X'_i$, since that reaction will eventually happen exactly $\vx(i)-c_i$ times (stopping when $\# C_{i,1}$ reaches $c_i$).
      This is implemented by the following bimolecular reactions.

      For each $i \in \{1,\ldots,k\}$ and $m,p \in \{1,\ldots,c_i\}$, if $m+p \leq c_i$, add the reaction
      $$
        C_{i,m} + C_{i,p} \to C_{i,m+p}.
      $$
      If $m+p > c_i$, add the reaction
      $$
        C_{i,m} + C_{i,p} \to C_{i,c_i} + (m+p-c_i) X'_i.
      $$
      By Lemma~\ref{lem-leader-election}, these reactions complete in expected time $O(n)$.

  \item[\underline{$n_{i,j} / d_j$ coefficient}:]
      For each $i \in \{1,\ldots,k\}$, add the reaction
      $$
        X'_i \to X_{i,1} + X_{i,2} + \ldots + X_{i,l}
      $$
      This allows each output to be associated with its own copy of the input.
      By Lemma~\ref{lem-unimol-log}, these reactions complete in expected time $O(\log n)$.

      For each $i \in \{1,\ldots,k\}$ and $j \in \{1,\ldots,l\}$, add the reaction
      $$
        X_{i,j} \to \left\{
                      \begin{array}{ll}
                        n_{i,j} D_{j,1}^P, & \hbox{if $n_{i,j} > 0$;} \\
                        (- n_{i,j}) D_{j,1}^C, & \hbox{if $n_{i,j} < 0$.}
                      \end{array}
                    \right.
      $$
      By Lemma~\ref{lem-unimol-log}, these reactions complete in expected time $O(\log n)$.

      We must now divide $\# D_{j,1}^P$ and $\# D_{j,1}^C$ by $d_j$.
      This is accomplished by the high-order reactions $d_j D_{j,1}^P \to Y_j^P$ and $d_j D_{j,1}^C \to Y_j^C$.
      Similarly to the previous component, we implement these with the following reactions for $d_j \ge 1$.

      We first handle the case $d_j > 1$.
      For each $j \in \{1,\ldots,l\}$ and $m,p \in \{1,\ldots,d_j-1\}$, if $m+p \leq d_j-1$, add the reactions
      \begin{eqnarray*}
        D_{j,m}^P + D_{j,p}^P &\to& D_{j,m+p}^P
        \\
        D_{j,m}^C + D_{j,p}^C &\to& D_{j,m+p}^C
      \end{eqnarray*}
      If $m+p > c_i$, add the reactions
      \begin{eqnarray*}
        D_{j,m}^P + D_{j,p}^P &\to& D_{j,m+p-d_j}^P + Y^P_j
        \\
        D_{j,m}^C + D_{j,p}^C &\to& D_{j,m+p-d_j}^C + Y^C_j
      \end{eqnarray*}
	  By Lemma~\ref{lem-leader-election}, these reactions complete in expected time $O(n)$.

      When $d_j = 1$, we only have the following unimolecular reactions.
    	\begin{eqnarray*}
    		D_{j,1}^P &\to& Y^P_j\\
    		D_{j,1}^C &\to& Y^C_j
    	\end{eqnarray*}
      By Lemma~\ref{lem-unimol-log}, these reactions complete in expected time $O(\log n)$.

      These reactions will produce $\frac{1}{d_j} \sum_{n_{i,j} > 0} n_{i,j}(\vx(i)-c_i)$ copies of $Y^P_j$ and $- \frac{1}{d_j} \sum_{n_{i,j} < 0} n_{i,j}(\vx(i)-c_i)$ copies of $Y^C_j$.
      Therefore, letting $\#_\text{coef} Y^P_j$ and $\#_\text{coef} Y^C_j$ denote the number of copies of $Y^P_j$ and $Y^C_j$ eventually produced just by this component, it holds that $\#_\text{coef} Y^P_j - \#_\text{coef} Y^C_j = \frac{1}{d_j} \sum_{i=1}^k n_{i,j} (\vx(i) - c_i)$.

  \item[\underline{$b_j$ offset}:]
      For each $j \in \{1,\ldots,l\}$, add the reaction
      \begin{equation} \label{rxn-Bj}
        B_j + B_j \to B_j + b_j Y^C_j
      \end{equation}
      By Lemma~\ref{lem-leader-election}, these reactions complete in expected time $O(n)$.

      Reaction~\eqref{rxn-X-to-C-and-Bs} produces $b_j$ copies of $Y^P_j$ for each copy of $B_j$ produced, which is $\vx(i)$.
      Reaction~\eqref{rxn-Bj} occurs precisely $\vx(i)-1$ times.
      Therefore reaction~\eqref{rxn-Bj} produces precisely $b_j$ fewer copies of $Y^C_j$ than reaction~\eqref{rxn-X-to-C-and-Bs} produces of $Y^P_j$.
      This implies that when all copies of $Y^C_j$ are eventually produced by reaction~\eqref{rxn-Bj}, the number of $Y^P_j$'s produced by reaction~\eqref{rxn-X-to-C-and-Bs} minus the number of $Y^C_j$'s produced by reaction~\eqref{rxn-Bj} is $b_j$. \qedhere
  \end{description}
\end{proof}

We require the following lemma, proven in~\cite{CheDotSol12}.

\begin{lem}[\cite{CheDotSol12}]\label{lem-semilinear-function-finite-union-linear}
  Let $f:\N^k\to\N^l$ be a semilinear function.
  Then there is a finite set $\{f_1:\N^k \dashrightarrow \N^l,\ldots,f_m:\N^k \dashrightarrow \N^l\}$ of affine partial functions, where each $\dom f_i$ is a linear set, such that, for each $\vx\in\N^k$, if $f_i(\vx)$ is defined, then $f(\vx)=f_i(\vx)$, and $\bigcup_{i=1}^m \dom f_i = \N^k$.
\end{lem}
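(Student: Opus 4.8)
The plan is to exploit the hypothesis directly: since $f$ is semilinear, its graph $G = \setl{(\vx,\vy)\in\N^{k+l}}{f(\vx)=\vy}$ is a finite union $G = \bigcup_{s=1}^S L_s$ of linear sets, and I will produce one affine partial function $f_s$ from each $L_s$. Write each linear set as $L_s = \setl{(\vb_s,\vb_s') + \sum_{t=1}^{p_s} n_t (\vu_{s,t},\vu_{s,t}')}{n_1,\ldots,n_{p_s}\in\N}$, splitting every base and period vector into its $\N^k$-component (first $k$ coordinates) and its $\N^l$-component (last $l$ coordinates). Let $\pi:\N^{k+l}\to\N^k$ be projection onto the first $k$ coordinates. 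I will take $\dom f_s = \pi(L_s)$, which is immediately a linear set with base point $\vb_s$ and periods $\vu_{s,1},\ldots,\vu_{s,p_s}$, and I will show that $f$ restricted to $\pi(L_s)$ is affine.

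The heart of the argument is showing that this restriction is affine. Fix $s$ and drop the subscript. Let $U$ be the $k\times p$ integer matrix whose columns are $\vu_1,\ldots,\vu_p$ and $U'$ the $l\times p$ integer matrix whose columns are $\vu_1',\ldots,\vu_p'$. Because $L_s\subseteq G$ and $G$ is the graph of a function, any two points of $L_s$ sharing the same first $k$ coordinates must share the same last $l$ coordinates; that is, for $\vec{n},\vec{n}'\in\N^p$, $U\vec{n}=U\vec{n}'$ forces $U'\vec{n}=U'\vec{n}'$. Since every integer vector is a difference of two vectors in $\N^p$, this says exactly that $\ker U\subseteq\ker U'$ as rational subspaces (the integer points of $\ker U$ span it over $\Q$). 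A standard factorization then yields a rational $l\times k$ matrix $\vec{A}$ with $U'=\vec{A}U$, i.e. $\vu_t'=\vec{A}\vu_t$ for all $t$. Consequently, for any $\vx=\vb+U\vec{n}\in\pi(L_s)$ the corresponding output is $\vy=\vb'+U'\vec{n}=\vb'+\vec{A}(U\vec{n})=\vb'+\vec{A}(\vx-\vb)$, and the kernel containment is precisely what makes this value independent of the chosen $\vec{n}$. Thus $f(\vx)=\vec{A}(\vx-\vb)+\vb'$ on $\pi(L_s)$. Taking $\vec{c}=\vb\in\N^k$ and noting that $\vx=\vb+U\vec{n}\geq\vb$ coordinatewise (as $U\geq 0$ and $\vec{n}\geq 0$), we get $\vx(i)-c_i\geq 0$ for all $i$, so $f_s(\vx)=\vec{A}(\vx-\vec c)+\vb'$ matches the definition of an affine partial function with linear domain.

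It then remains to assemble the pieces. Because $f$ is total, every $\vx$ satisfies $(\vx,f(\vx))\in G$, so $\vx\in\pi(G)=\bigcup_s\pi(L_s)=\bigcup_s\dom f_s$; hence the domains cover $\N^k$. Consistency is automatic: each $f_s$ is a restriction of $f$, so wherever $f_s(\vx)$ is defined we have $f(\vx)=f_s(\vx)$, and in particular the functions agree on all overlaps. Setting $m=S$ completes the construction. I expect the main obstacle to be the middle step --- turning the combinatorial fact that $G$ is a function graph into the linear-algebraic statement $\ker U\subseteq\ker U'$ and invoking the factorization $U'=\vec{A}U$ --- together with the bookkeeping that the resulting $\vec{A}$ is rational rather than integral, which is exactly why the paper's affine functions are permitted rational coefficients; the remaining steps (projection of a linear set is linear, covering from totality, and automatic consistency) are routine.
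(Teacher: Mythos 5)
Your proof is correct. The paper does not reproduce a proof of this lemma (it is imported from~\cite{CheDotSol12}), but your argument---decomposing the graph into linear sets, projecting each onto the first $k$ coordinates to get a linear domain, and using the fact that the graph is a function to force $\ker U \subseteq \ker U'$ and hence a rational factorization $U' = \vec{A}U$---is essentially the argument of the cited reference, and it is precisely why the definition of affine partial function allows rational coefficients and the nonnegative offset $\vec{c}$.
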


We require the following theorem, due to Angluin, Aspnes, and Eisenstat~\cite[Theorem 5]{angluin2006fast}, which states that any semilinear predicate can be decided by a CRD in expected time $O(n)$.

\begin{theorem}[\cite{angluin2006fast}]\label{thm-semilinear-predicate-n}
  Let $\phi:\N^k\to\{0,1\}$ be a semilinear predicate.
  Then there is a leaderless CRD $\calD$ that stably decides $\phi$, and the expected time to reach an output-stable configuration is $O(n)$.
\end{theorem}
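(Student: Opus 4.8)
The plan is to reduce the problem to a constant number of \emph{atomic} predicates, compute each separately, run them in parallel on private copies of the input, and then combine their answers. By the standard structure theory of semilinear sets (equivalently, quantifier-free Presburger formulas), $\phi$ is a Boolean combination of finitely many \emph{threshold} predicates $\vec{a}\cdot\vx \geq b$ and \emph{modular} predicates $\vec{a}\cdot\vx \equiv c \pmod m$, where $\vec{a}\in\Z^k$ and $b,c,m$ are constants. Since every CRD I build consumes the input, I would first duplicate it: add, for each $i$, the unimolecular reaction $X_i \to X_i^{(1)} + \cdots + X_i^{(t)}$, producing one private copy $X_i^{(s)}$ of each input molecule for each of the $t$ atomic predicates. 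By Lemma~\ref{lem-unimol-log} this costs only $O(\log n)$ expected time, and thereafter the $t$ sub-CRDs act on disjoint species and never interfere.

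For a threshold predicate $\vec{a}\cdot\vx \geq b$, I would use a signed-token representation: convert each $X_i^{(s)}$ (unimolecularly, hence in $O(\log n)$ time by Lemma~\ref{lem-unimol-log}) into $|a_i|$ copies of a token $P$ if $a_i>0$ and into $|a_i|$ copies of $N$ if $a_i<0$. An annihilation reaction $P + N \to \cdots$ then cancels matched pairs; starting from $O(n)$ tokens in volume $v=O(n)$, a direct propensity calculation of the same flavor as Lemma~\ref{lem-leader-election} shows the surviving excess is reached in expected time $O(n)$, since the dominant cost $\sum_{q} v/(q(q+\delta))$ is $O(v)=O(n)$. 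Comparing the surviving excess against the constant $b$ is done by a constant-size family of absorber species. Finally, the resulting Boolean answer must be imposed on \emph{every} molecule (recall all species vote); this is achieved by a self-amplifying catalytic ``epidemic'' reaction, which converts the whole population in expected time $O(\log n)$: the build-up to $\Omega(n)$ carriers costs $\sum_i v/(i(n-i)) = O(\log n)$, after which Lemma~\ref{lem-populous-species-eats-other} finishes the job.

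The modular predicate $\vec{a}\cdot\vx \equiv c \pmod m$ is handled analogously: convert each $X_i^{(s)}$ into residue tokens and combine them with reactions $R_u + R_w \to R_{(u+w)\bmod m}$, which reduce the token count by one per step and are exactly of the form covered by Lemma~\ref{lem-leader-election}, so they stabilize to a single residue value in expected time $O(n)$; an epidemic then broadcasts whether that residue equals $c$, again in $O(\log n)$ time.

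The hard part will be the Boolean combination together with output-stability, carried out error-free, leaderlessly, and still in $O(n)$ time. The difficulty is that no molecule can detect when a sub-predicate has \emph{finished}, so I cannot simply AND/OR the stabilized sub-answers; instead the combining gadget must be monotone and self-stabilizing, so that the global consensus it reaches is correct for \emph{every} reachable configuration and can never lock into a wrong output-stable state. Concretely, I would have each sub-predicate's vote drive a dual-rail consensus over all species, with the Boolean operators implemented so that a change in any sub-answer can always still flip the global vote toward the correct value, matching the requirement that no incorrect output-stable configuration is reachable. Verifying this robustness, and checking that the cross-pool interactions needed to combine a constant number of sub-answers respect the $O(n)$ bound (each interaction between $\Omega(n)$-populated species costing $O(\log n)$ by Lemma~\ref{lem-populous-species-eats-other}), is the crux of the argument.
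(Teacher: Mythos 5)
First, a point of reference: the paper does not prove this statement at all --- it is imported verbatim from Angluin, Aspnes, and Eisenstat~\cite{angluin2006fast} (their Theorem~5), so there is no in-paper proof to compare against. Measured against the known construction, your overall architecture is the right one: decompose $\phi$ into a Boolean combination of threshold atoms $\vec{a}\cdot\vx\geq b$ and modular atoms $\vec{a}\cdot\vx\equiv c \pmod m$, fan the input out to private pools, compute each atom by pairwise token-merging, and broadcast the result. Your per-phase time accounting (unimolecular fan-out in $O(\log n)$ via Lemma~\ref{lem-unimol-log}, pairwise merging in $O(n)$ via the propensity sum of Lemma~\ref{lem-leader-election}) is also sound.

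The gap is that the part you explicitly defer --- combining the atoms while guaranteeing output stability under the all-species-vote convention --- is not a finishing step but is where essentially all of the content of the theorem lives, and your sketch has two concrete problems there. (i) The annihilation gadget $P+N\to\cdots$ can leave \emph{zero} survivors (e.g.\ when $\vec{a}\cdot\vx=b=0$), after which no molecule in that pool holds the answer, and your ``constant-size family of absorber species'' for comparing the excess against $b$ is itself a leaderless counting problem that you do not solve. The standard fix is to never annihilate: merge two tokens into one token carrying a signed sum truncated to a constant range determined by $b$ (resp.\ a residue mod $m$), so that a witness token always survives and the comparison is read off its internal state. (ii) Your broadcast is a self-amplifying epidemic of a value that \emph{changes} during the computation (the surviving token's truncated sum or residue is updated at every merge). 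If carriers of the two answers convert each other symmetrically, stability survives but the expected convergence time does not: from a configuration where $\Omega(n)$ molecules carry a stale wrong value, the count of correct carriers performs an essentially unbiased walk with only an $O(1)$-rate drift supplied by the single token, which is not $O(n)$. If instead the conversion is one-sided, an early wrong epidemic can win and stability fails; and if only the single surviving token converts followers catalytically, stability holds but the cost is $\sum_{j=1}^{n} n/j = \Theta(n\log n)$ in volume $n$, exceeding the claimed bound (Lemma~\ref{lem-populous-species-eats-other} does not apply, since the catalyst population is $1$, not $\Omega(n)$). Resolving this tension is precisely the crux you name, so the proposal as written does not establish the theorem; it reduces it to its hardest sub-problem.
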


The following is the main theorem of this paper.
It shows that semilinear functions can be computed by leaderless CRCs in linear expected time.

\newcommand{\hY}{\hat{Y}}

\begin{thm} \label{thm-semilinear-function-n-no-leader}
  Let $f:\N^k\rightarrow\N^l$ be a semilinear function.
  Then there is a leaderless CRC that stably computes $f$ in expected time $O(n)$.
\end{thm}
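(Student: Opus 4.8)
The plan is to assemble the three ingredients already in hand: the decomposition of a semilinear function into affine pieces (Lemma~\ref{lem-semilinear-function-finite-union-linear}), the leaderless diff-representation computation of each affine piece (Lemma~\ref{lem-affine-fast}), and the leaderless linear-time decision of which piece applies (Theorem~\ref{thm-semilinear-predicate-n}). First I would apply Lemma~\ref{lem-semilinear-function-finite-union-linear} to obtain affine partial functions $f_1,\ldots,f_m$ with linear domains whose union is $\N^k$ and which agree with $f$ wherever defined. Since semilinear sets are closed under Boolean operations, I would refine these into pairwise disjoint semilinear sets $S_1,\ldots,S_m$ partitioning $\N^k$ with $S_i\subseteq\dom f_i$, so that for every input exactly one index is ``selected'' and $f(\vx)=f_i(\vx)$ on $S_i$; each membership predicate for $S_i$ is still semilinear and hence decidable by Theorem~\ref{thm-semilinear-predicate-n}.

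To run all components without their consuming one another's input, the first reactions replicate each input molecule $X_i$ onto a constant number of disjoint tracks, one private copy for each of the $m$ deciders and each of the $m$ affine computers, say $X_i \to X_i^{(1)}+\cdots+X_i^{(2m)}$; by Lemma~\ref{lem-unimol-log} this replication costs only $O(\log n)$ time. On the decider tracks I would run in parallel the CRDs of Theorem~\ref{thm-semilinear-predicate-n} deciding membership in each $S_i$, and on function track $i$ I would run the affine CRC of Lemma~\ref{lem-affine-fast}, producing the diff-representation outputs $Y^P_{i,j}$ and $Y^C_{i,j}$ with $f_i(\vx)(j)=Y^P_{i,j}-Y^C_{i,j}$. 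Because Lemma~\ref{lem-affine-fast} produces these outputs \emph{monotonically}, downstream reactions may freely consume them without disturbing the affine computations, which is exactly why that lemma was stated in diff-representation form.

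The combination step produces the final output species $\hY_j$ from the selected track's diff-outputs, gated by the decider consensus; this selection gadget is the one of~\cite{CheDotSol12}. Concretely, the ``yes'' voters of the $S_i$-decider act catalytically to net-convert $(Y^P_{i,j},Y^C_{i,j})$ into $\hY_j$, while its ``no'' voters drive the reverse conversion, so that the standing count of $\hY_j$ always tracks whichever index is currently voted in. I expect the main obstacle to be verifying \emph{stability} of this combination, since before the deciders stabilize the consensus may transiently flip and cause premature misproduction of $\hY_j$. The argument has two halves. No incorrect output-stable configuration is reachable: as long as some $\hY_j$ disagrees with $f(\vx)(j)$, either a decider consensus or a diff-output is still adjustable, so some conversion reaction can still change $\hY_j$. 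The correct output is always reachable: from any configuration one can first drive every decider to its correct stable consensus and then run the catalytic conversions to bring each $\hY_j$ to $Y^P_{i^\ast,j}-Y^C_{i^\ast,j}$ for the unique selected index $i^\ast$. Here monotonicity and the diff form are essential, since the difference encoding lets the net output be corrected in either direction through the reversible conversions, while monotone production guarantees that the $Y^P_{i,j},Y^C_{i,j}$ counts used as a ``source'' never retreat.

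Finally, the time bound is the genuinely new content relative to~\cite{CheDotSol12}. There is a constant number ($O(m)$, independent of $n$) of parallel components, each reaching its stable configuration in expected time $O(n)$ by Theorem~\ref{thm-semilinear-predicate-n} and Lemma~\ref{lem-affine-fast}; since the maximum of finitely many nonnegative stopping times is bounded by their sum, all components jointly stabilize in expected time $O(n)$. The combination reactions process $O(n)$ molecules and, being either catalyzed by the $\Omega(n)$-count decider species or unimolecular and bimolecular in the diff-outputs, complete in expected time $O(n)$ by Lemmas~\ref{lem-unimol-log}, \ref{lem-leader-election}, and~\ref{lem-populous-species-eats-other}. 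Summing the $O(\log n)$ replication, the $O(n)$ component computations, and the $O(n)$ combination yields the claimed $O(n)$ expected time.
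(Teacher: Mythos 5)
Your proposal is correct and follows essentially the same route as the paper: decompose $f$ via Lemma~\ref{lem-semilinear-function-finite-union-linear}, disjointify the domains (the paper does this with the ordered predicates $\phi_i$, which is your $S_i$ refinement), run the CRDs of Theorem~\ref{thm-semilinear-predicate-n} and the monotone diff-representation CRCs of Lemma~\ref{lem-affine-fast} in parallel, and use the vote-catalyzed activate/deactivate gadget of~\cite{CheDotSol12} with the same two-sided stability argument and the same $O(n)$ accounting. The only cosmetic difference is that you make the input-replication step and the Boolean-closure disjointification explicit where the paper leaves them implicit.
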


\begin{proof}
  %The CRC is essentially the same CRC described in~\cite[Lemma 4.3]{CheDotSol12}, but we reproduce the construction in full for the sake of self-containment.
  %The key differences in this proof are in the analysis: the current CRC completes in expected time $O(n)$ instead of $O(n \log n)$, and no leader molecules are required.
  The CRC will have input species $\Sigma = \{X_1, \ldots, X_k\}$ and output species $\Gamma = \{Y_1,\ldots,Y_l\}$.
  By Lemma~\ref{lem-semilinear-function-finite-union-linear}, there is a finite set $F=\{f_1:\N^k \dashrightarrow \N^l,\ldots,f_m:\N^k \dashrightarrow \N^l\}$ of affine partial functions, where each $\dom f_i$ is a linear set, such that, for each $\vx\in\N^k$, if $f_i(\vx)$ is defined, then $f(\vx)=f_i(\vx)$.
  We compute $f$ on input $\vx$ as follows.
  Since each $\dom f_i$ is a linear (and therefore semilinear) set, by Theorem~\ref{thm-semilinear-predicate-n} we compute each semilinear predicate $\phi_i = $ ``$\vx \in \dom f_i$ and $(\forall i' \in \{1,\ldots,i-1\})\ \vx \not\in \dom f_{i'}$?'' by separate parallel CRD's each stabilizing in expected time $O(n)$.
  (The latter condition ensures that for each $\vx$, precisely one of the predicates is true, in case the domains of the partial functions have nonempty intersection.)

  By Lemma~\ref{lem-affine-fast}, for each $i\in\{1,\ldots,m\}$, there is a diff-representation $\hat{f}_i$ of $f_i$ that can be stably computed by parallel CRC's.
  Assume that for each $i \in \{1,\ldots,m\}$ and each $j \in \{1,\ldots,l\}$, the $j$th pair of outputs $\vy_P(j)$ and $\vy_C(j)$ of the $i$th function is represented by species $\hY_{i,j}^P$ and $\hY_{i,j}^C$.
  We interpret each $\hY_{i,j}^P$ and $\hY_{i,j}^C$ as an ``inactive'' version of ``active'' output species $Y_{i,j}^P$ and $Y_{i,j}^C$.

  For each $i \in \{1,\ldots,m\}$, for the CRD $\calD_i=(\Lambda,R,\Sigma,\Upsilon)$ computing the predicate $\phi_i$, let $L_i^1$ represent any species in $\Upsilon$, and $L_i^0$ represent any species in $\Lambda\setminus\Upsilon$, and that once $\calD_i$ reaches an output stable configuration, $\# L_i^b = \Omega(n)$, where $b$ is the output of $\calD_i$.
  Then add the following reactions for each $i \in \{1,\ldots,m\}$ and each $j \in \{1,\ldots,l\}$:
  \begin{eqnarray}
    L_i^1 + \hY_{i,j}^P &\to& L_i^1 + Y_{i,j}^P + Y_j \label{rxn-activate-P-output}
    \\
    L_i^0 + Y_{i,j}^P &\to& L_i^0 + M_{i,j} \label{rxn-deactivate-P-output-1}
    \\
    M_{i,j} + Y_j &\to& \hY_{i,j}^P \label{rxn-deactivate-P-output-2}
  \end{eqnarray}
The latter two reactions implement the reverse direction of the first reaction -- using $L_i^0$ as a catalyst instead of $L_i^1$ -- using only bimolecular reactions.
Also add the reactions
  \begin{eqnarray}
    L_i^1 + \hY_{i,j}^C &\to& L_i^1 + Y_{i,j}^C \label{rxn-activate-C-output}
    \\
    L_i^0 + Y_{i,j}^C &\to& L_i^0 + \hY_{i,j}^C \label{rxn-deactivate-C-output}
  \end{eqnarray}
and
  \begin{eqnarray}
    Y_{i,j}^P + Y_{i,j}^C &\to& K_{j} \label{rxn-YP-YC-eat}
    \\
    K_{j} + Y_j &\to& \emptyset \label{rxn-K-Y-eat}
  \end{eqnarray}

  That is, a ``yes'' answer for function $i$ activates the $i$th output and a ``no'' answer deactivates the $i$th output.
  Eventually each CRD stabilizes so that precisely one $i$ has $L_i^1$ present, and for all $i' \neq i$, $L_{i'}^0$ is present.
  We now claim that at this point, all outputs for the correct function $\hat{f}_i$ will be activated and all other outputs will be deactivated.
  The reactions enforce that at any time, $\# Y_j = \# K_j + \sum_{i=1}^m (\# Y_{i,j}^P + \# M_{i,j})$.
  In particular, $\# Y_j \geq \# K_j$ and $\# Y_j \geq \# M_{i,j}$ at all times, so there will never be a $K_j$ or $M_{i,j}$ molecule that cannot participate in the reaction of which it is a reactant.
  Eventually $\# Y_{i,j}^P$ and $\# Y_{i,j}^C$ stabilize to 0 for all but one value of $i$ (by 
reactions~\eqref{rxn-deactivate-P-output-1}, \eqref{rxn-deactivate-P-output-2}, \eqref{rxn-deactivate-C-output}),
and for this value of $i$, $\# Y_{i,j}^P$ stabilizes to $\vy(j)$ and $\# Y_{i,j}^C$ stabilizes to 0 (by reaction \eqref{rxn-YP-YC-eat}).
  Eventually $\# K_j$ stabilizes to 0 by the last reaction.
  Eventually $\# M_{i,j}$ stabilizes to 0 since $L_i^0$ is absent for the correct function $\hat{f}_i$.
  This ensures that $\# Y_j$ stabilizes to $\vy(j)$.

  It remains to analyze the expected time to stabilization.
  Let $n = \|\vx\|$.
  By Lemma~\ref{lem-affine-fast}, the expected time for each affine function computation to complete is $O(n)$.
  Since the $\hY_{i,j}^P$ are produced monotonically, the most $Y_{i,j}^P$ molecules that are ever produced is $\#_\infty \hY_{i,j}^P$.
  Since
we have $m$ computations in parallel, the expected time for all of them to complete is $O(n m) = O(n)$ (since $m$ depends on $f$ but not $n$).
  We must also wait for each predicate computation to complete.
  By Theorem~\ref{thm-semilinear-predicate-n}, each of these predicates takes expected time $O(n)$ to complete, so all of them complete in expected time $O(m n) = O(n)$.

  At this point, the $L^i_1$ leaders must convert inactive output species to active, and $L^{i'}_0$ (for $i' \neq i$) must convert active output species to inactive.
  By Lemma~\ref{lem-populous-species-eats-other}, reactions~\eqref{rxn-activate-P-output},~\eqref{rxn-deactivate-P-output-1},~\eqref{rxn-activate-C-output}, and~\eqref{rxn-deactivate-C-output} complete in expected time $O(\log n)$.
  Once this is completed, by Lemma~\ref{lem-leader-election}, reaction~\eqref{rxn-deactivate-P-output-2} completes in expected time $O(n)$.
  Reaction~\eqref{rxn-YP-YC-eat} completes in expected time $O(n)$ by Lemma~\ref{lem-leader-election}.
  Once this is completed, reaction~\eqref{rxn-K-Y-eat} completes in expected time $O(n)$ by Lemma~\ref{lem-leader-election}.
\end{proof}

% !TEX root = main.tex

\section{Conclusion}
\label{sec-conclusion}

The clearest shortcoming of our leaderless CRC, compared to the leader-employing CRC of~\cite{CheDotSol12}, is the time complexity.
Our CRC takes expected time $O(n)$ to complete with $n$ input molecules, versus $O(\log^5 n)$ for the CRC of~\cite{CheDotSol12}.
The major open question is, for each semilinear function $f:\N^k\to\N^l$, is there a leaderless CRC that stably computes $f$ on input of size $n$ in expected time $t(n)$, where $t$ is a sublinear function?
This may relate to the question of whether there is a sublinear time CRN that solves the leader election problem, i.e., in volume $n$ with an initial state with $n$ copies of species $X$ and no other species initially present, produce a single copy of a species $L$.
However, it is conceivable that there is a direct way to compute semilinear functions quickly without needing to use a leader election.

If this is not possible for all semilinear functions, another interesting open question is to precisely characterize the class of functions that can be stably computed by a leaderless CRC in polylogarithmic time.
For example, the class of linear functions with positive integer coefficients (e.g., $f(x_1,x_2) = 3 x_1 + 2x_2$) has this property since they are computable by $O(\log n)$-time unimolecular reactions such as $X_1 \to 3Y, X_2 \to 2Y$.
However, most of the CRN programming techniques used to generalize beyond such functions seem to require some bimolecular reaction $A+B \to \ldots$ in which it is possible to have $\# A = \# B = 1$, making the expected time at least $n$ just for this reaction.

\paragraph{Acknowledgement.} We are indebted to Anne Condon for helpful discussions and suggestions.

{\small
\bibliographystyle{plain}
\bibliography{tam}

\begin{thebibliography}{10}

\bibitem{angluin2006passivelymobile}
Dana Angluin, James Aspnes, Zo\"{e} Diamadi, Michael Fischer, and Ren\'{e}
  Peralta.
\newblock Computation in networks of passively mobile finite-state sensors.
\newblock {\em Distributed Computing}, 18:235--253, 2006.
\newblock Preliminary version appeared in PODC 2004.

\bibitem{AngluinAE06}
Dana Angluin, James Aspnes, and David Eisenstat.
\newblock Stably computable predicates are semilinear.
\newblock In {\em PODC 2006: Proceedings of the twenty-fifth annual ACM
  symposium on Principles of distributed computing}, pages 292--299, New York,
  NY, USA, 2006. ACM Press.

\bibitem{angluin2006fast}
Dana Angluin, James Aspnes, and David Eisenstat.
\newblock Fast computation by population protocols with a leader.
\newblock {\em Distributed Computing}, 21(3):183--199, September 2008.
\newblock Preliminary version appeared in DISC 2006.

\bibitem{cardelli2011strand}
Luca Cardelli.
\newblock Strand algebras for {DNA} computing.
\newblock {\em Natural Computing}, 10(1):407--428, 2011.

\bibitem{CheDotSol12}
Ho-Lin Chen, David Doty, and David Soloveichik.
\newblock Deterministic function computation with chemical reaction networks.
\newblock In {\em DNA 18: Proceedings of The 18th International Meeting on DNA
  Computing and Molecular Programming}, volume 7433 of {\em Lecture Notes in
  Computer Science}, pages 25--42. Springer, 2012.

\bibitem{CondonHMT12}
Anne Condon, Alan Hu, J\'{a}n Ma\v{n}uch, and Chris Thachuk.
\newblock Less haste, less waste: On recycling and its limits in strand
  displacement systems.
\newblock {\em Journal of the Royal Society Interface}, 2:512--521, 2012.
\newblock Preliminary version appeared in DNA 17.

\bibitem{condon2012reachability}
Anne Condon, Bonnie Kirkpatrick, and J\'{a}n Ma{\v{n}}uch.
\newblock Reachability bounds for chemical reaction networks and strand
  displacement systems.
\newblock In {\em {DNA 18:} 18th International Meeting on {DNA} Computing and
  Molecular Programming}, volume 7433, pages 43--57. Springer, 2012.

\bibitem{CooSolWinBru09}
Matthew Cook, David Soloveichik, Erik Winfree, and Jehoshua Bruck.
\newblock Programmability of chemical reaction networks.
\newblock In Anne Condon, David Harel, Joost~N. Kok, Arto Salomaa, and Erik
  Winfree, editors, {\em Algorithmic Bioprocesses}, pages 543--584. Springer
  Berlin Heidelberg, 2009.

\bibitem{Gillespie77}
Daniel~T. Gillespie.
\newblock Exact stochastic simulation of coupled chemical reactions.
\newblock {\em Journal of Physical Chemistry}, 81(25):2340--2361, 1977.

\bibitem{hjelmfelt1991chemical}
Allen Hjelmfelt, Edward~D. Weinberger, and John Ross.
\newblock Chemical implementation of neural networks and {T}uring machines.
\newblock {\em Proceedings of the National Academy of Sciences},
  88(24):10983--10987, 1991.

\bibitem{riedelDSP2012}
Hua Jiang, Marc Riedel, and Keshab Parhi.
\newblock Digital signal processing with molecular reactions.
\newblock {\em {IEEE} Design and Test of Computers}, 29(3):21--31, 2012.

\bibitem{magnasco1997chemical}
Marcelo~O. Magnasco.
\newblock Chemical kinetics is {T}uring universal.
\newblock {\em Physical Review Letters}, 78(6):1190--1193, 1997.

\bibitem{soloveichik2009robust}
David Soloveichik.
\newblock Robust stochastic chemical reaction networks and bounded tau-leaping.
\newblock {\em Journal of Computational Biology}, 16(3):501--522, 2009.

\bibitem{SolCooWinBru08}
David Soloveichik, Matthew Cook, Erik Winfree, and Jehoshua Bruck.
\newblock Computation with finite stochastic chemical reaction networks.
\newblock {\em Natural Computing}, 7(4):615--633, 2008.

\bibitem{ThaCon12}
Chris Thachuk and Anne Condon.
\newblock Space and energy efficient computation with {DNA} strand displacement
  systems.
\newblock In {\em DNA 18: Proceedings of The 18th International Meeting on DNA
  Computing and Molecular Programming}, pages 135--149, 2012.

\bibitem{zavattaro2008termination}
Gianluigi Zavattaro and Luca Cardelli.
\newblock Termination problems in chemical kinetics.
\newblock {\em CONCUR 2008-Concurrency Theory}, pages 477--491, 2008.

\end{thebibliography}
}

\newpage
\appendix
\section{Appendix}
\label{sec-appendix}

In this appendix, we prove some lemmas about the time complexity of certain common sequences of reactions.
Some of these are implicit or explicit in many earlier papers on stochastic CRNs, but we include them for the sake of self-containment.

The lemmas are stated with respect to a certain ``initial configuration'' $\vc$ that may not be the initial configuration of an actual CRN we define.
This is because the lemmas are employed to argue about CRNs that are guaranteed to evolve to some configuration $\vc$ that satisfies the hypothesis of the lemma, and we use the lemma to bound the time it takes for the CRN to complete a sequence of reactions, starting from $\vc$.
Therefore terms such as ``applicable reaction'' refer to being applicable from $\vc$ and any configuration reachable from it, although some additional inapplicable reactions may have been applicable prior to reaching the configuration $\vc$.

\begin{paragraph}{Lemma~\ref{lem-unimol-log}.}
  Let $\mathcal{A} = \{A_1,\ldots,A_m\}$ be a set of species with the property that they appear only in applicable reactions of the form $A_i \to \sum_l B_l$, where $B_l \not\in \mathcal{A}$. Then starting from a configuration $\vc$ in which for all $i \in \{1,\ldots,m\}$, $\#_\vc A_i = O(n)$, with volume $O(n)$, the expected time to reach a configuration in which none of the described reactions can occur is $O(\log n)$.
\end{paragraph}

\begin{proof}
  Assume the hypothesis.
  Let $c\in\N$ be the constant such that $\sum_{i=1}^m \#_\vc A_i \leq cn$.
  After each relevant reaction occurs, this sum is reduced by 1.
  Therefore no reactions can occur after $cn$ reactions have executed.
  If $\sum_{i=1}^m \# A_i = k$, the expected time for any reaction to occur is $\frac{1}{k}$.
  By linearity of expectation, the expected time for $cn$ reactions to execute is at most $\sum_{k=1}^{cn} \frac{1}{k} = O(\log n).$
\end{proof}

\begin{paragraph}{Lemma~\ref{lem-leader-election}.}
  Let $\mathcal{A} = \{A_1,\ldots,A_m\}$ be a set of species with the property that they appear only in applicable reactions of the form $A_i + A_j \to A_k + \sum_l B_l$, where $B_l \not\in \mathcal{A}$, and for all $i,j \in \{1,\ldots,m\}$, there is at least one reaction $A_i + A_j \to \ldots$. Then starting from a configuration $\vc$ in which for all $i \in \{1,\ldots,m\}$, $\#_\vc A_i = O(n)$, with volume $O(n)$, the expected time to reach a configuration in which none of the described reactions can occur is $O(n)$.
\end{paragraph}

\begin{proof}
  Assume the hypothesis.
  Let $c \in\N$ be a constant such that $\sum_{i=1}^m \#_\vc A_i \leq cn$, and let $c'$ be a constant such that the volume is at most $c' n$.
  After each relevant reaction occurs, this sum is reduced by 1.
  Therefore no reactions can occur after $cn-1$ reactions have executed.
	Now let $\rho(\vc,\alpha_{ij})$ be the propensity of the reaction $A_i + A_j \to A_k + \sum_l B_l$ which is equal to $\rho(\vc,\alpha_{ji})$ as well.
  Since $A_i$ can react with $A_j$ for any $i,j\in\{1,\ldots,m\}$, given that $\sum_{i=1}^m \# A_i = k$, the time for the next reaction to occur is an exponential random variable with rate equal to the sum of the rates of each possible reaction, i.e.,
 \begin{align*}
    \sum_{i=1}^m \sum_{\substack{j=1\\ j\ge i}}^m \rho(\vc,\alpha_{ij}) 
    &= 
    \frac{1}{2}\sum_{i=1}^m \sum_{\substack{j=1\\ j!=i}}^m \rho(\vc,\alpha_{ij}) + \sum_{i=1}^m \rho(\vc,\alpha_{ii})
    \\&=
    \frac{1}{2}\sum_{i=1}^m \sum_{\substack{j=1\\ j!=i}}^m \frac{\# A_i \#A_j}{c' n} +  \sum_{i=1}^m\frac{\# A_i (\#A_i-1)}{2 c' n}
    \\&= 
    \frac{1}{2c' n} \left[\sum_{i=1}^m \sum_{j=1}^m {\#A_i \#A_j} - \sum_{i=1}^m {\#A_i^2}\right] + \frac{1}{2c' n}\sum_{i=1}^m {\# A_i (\#A_i-1)}
    \\&=
    \frac{1}{2c' n} \left[ \sum_{i=1}^m \# A_i \left( \sum_{j=1}^m \#A_j \right) - \sum_{i=1}^m {\#A_i}\right]
    \\&=
    \frac{1}{2c' n}(k^2 - k)
\end{align*}
  so the expected time for the next reaction to occur is $\frac{c' n}{k^2-k}$.
  By linearity of expectation, the expected time for $cn-1$ reactions to execute is at most
  $
    \sum_{k=1}^{cn-1} \frac{c' n}{k^2-k}
    =
    c' n \sum_{k=1}^{cn-1}{(\frac{1}{k-1}-\frac{1}{k})}
    =
    c' n (1 - \frac{1}{cn-1})
	=
	O(n).
  $
\end{proof}

\begin{paragraph}{Lemma~\ref{lem-populous-species-eats-other}.}
  Let $\mathcal{C} = \{C_1,\ldots,C_m\}$ and $\mathcal{A} = \{A_1,\ldots,A_p\}$ be two sets of species with the property that they appear only in applicable reactions of the form $C_i + A_j \to C_i + \sum_l B_l$, where $B_l \not\in \mathcal{A}$. Then starting from a configuration $\vc$ in which for all $i \in \{1,\ldots,m\}$, $\#_\vc A_i = O(n)$ and $\#_0 C_i = \Omega(n)$, with volume $O(n)$, the expected time to reach a configuration in which none of the described reactions can occur is $O(\log n)$.
\end{paragraph}

\begin{proof}
  Assume the hypothesis.
  Then the counts of each $C_i$ do not decrease.
  (They may increase if some $B_l \in \mathcal{C}$, but this only strengths the conclusion.)
  Therefore this is similar to the proof of Lemma~\ref{lem-unimol-log}, since the expected time of each reaction when $\sum_{j=1}^p \#_0 A_j = k$ is within a constant of $\frac{1}{k}$.
\end{proof}

\end{document}